\documentclass[lettersize,journal]{IEEEtran}

\usepackage{url}
\usepackage{xcolor}
\usepackage{xspace}
\usepackage{enumitem}
\usepackage{graphicx}
\usepackage{hyperref}
\usepackage{listings}
\usepackage{multirow}
\usepackage{subcaption}
\usepackage[normalem]{ulem}
\usepackage[most]{tcolorbox}
\usepackage{amsmath,amssymb,amsthm}
\usepackage[linesnumbered,ruled,vlined]{algorithm2e}

\def \tool{\textsc{FocusTNT}\xspace}
\def \base{\textsc{Base}\xspace}
\def \slice{\textsc{Slice}\xspace}
\def \concr{\textsc{Cncrt}\xspace}
\def \sliceconcr{\textsc{Slice+Cncrt}\xspace}
\def \concrslice{\textsc{Cncrt+Slice}\xspace}

\newtheorem{theorem}{Theorem}[section]
\newtheorem{corollary}[theorem]{Corollary}
\newtheorem{assumption}[theorem]{Assumption}

\newsavebox{\mybox}

\newtcolorbox{rqtakeaway}{
  colback=gray!10,
  colframe=gray!60,
  boxrule=0.5pt,
  arc=2mm,
  left=2mm,
  right=2mm,
  top=1mm,
  bottom=1mm
}

\lstdefinestyle{customcpp}{
  basicstyle=\ttfamily\footnotesize,
  columns=fullflexible,
  keepspaces=true,
  showstringspaces=false,
  frame=single,
  breaklines=true,
  tabsize=2,
  keywordstyle=\color{blue!60!black},
  numbers=left,
  numberstyle=\tiny,
  xleftmargin=2em,
  framexleftmargin=2em
}

\begin{document}

\title{Loop-Based Slicing and Input-Driven Concretization: An Empirical Study of Termination and Non-Termination Analysis}

\author{
Negar Fathi, Rahul Purandare, Tachio Terauchi, and Hiroshi Unno
\thanks{N. Fathi and R. Purandare are with the University of Nebraska--Lincoln, Lincoln, NE, USA. Email addresses: nfathi2@huskers.unl.edu, rahul@unl.edu.}
\thanks{T. Terauchi is with Waseda University, Tokyo, Japan. Email: terauchi@waseda.jp.}
\thanks{H. Unno is with Tohoku University, Sendai, Japan. Email: hiroshi.unno@acm.org.}
}

\maketitle

\begin{abstract}
Termination and non-termination are fundamental correctness properties, but verifying them in real-world C programs remains difficult because loop interactions and nondeterministic inputs challenge existing analyzers. This paper presents an empirical study of lightweight, tool-independent source-level preprocessing for (non-)termination analysis. We implement \tool, a C front end that applies loop-based slicing to isolate loop-level obligations and input-driven concretization to specialize nondeterministic inputs into selected input-scenario variants. We evaluate slicing, concretization, and their combination across six analyzers on 117 C/C++ programs derived from real-world non-termination bugs and their fixes. The study examines effects on analyzer correctness, complementarity with original-program analysis, loop-level diagnostics, feature sensitivity, runtime behavior, semantic scope, and integration potential. Results show that preprocessing is not uniformly beneficial: its impact depends on the analyzer, task, and program features. Slicing provides conservative structural isolation and localization, whereas concretization can improve detectability for selected scenarios but narrows semantic scope and may increase analysis effort. Their combination is not consistently additive. Overall, the results support adaptive use of preprocessing as a complement to original-program analysis and provide practical guidance to application developers interpreting verification outcomes and tool developers improving analyzer robustness.
\end{abstract}

\begin{IEEEkeywords}
Termination Analysis, Non-Termination Analysis, Program Slicing, Concretization, Program Preprocessing, Empirical Evaluation.
\end{IEEEkeywords}

\section{Introduction}
\label{section:introduction}
Termination and non-termination are fundamental correctness properties in program analysis, especially for systems software where infinite execution can compromise responsiveness and reliability~\cite{cook2006,xie2017}. Failure to establish termination can also obstruct verification tasks that assume eventual return, such as liveness reasoning and compositional arguments~\cite{cook2006}. In practice, proving (non-)termination remains difficult because loops are often embedded in large, semantically noisy contexts that obscure the statements and dependencies most relevant to termination behavior~\cite{cook2006,xie2017}. Environment-driven nondeterminism further complicates analysis by exposing many input-dependent behaviors, only some of which may matter for a particular (non-)termination outcome~\cite{chen2014,larraz2014,xie2017}.

Static analyzers address termination properties by constructing proofs that generalize across executions, combining abstraction~\cite{urban2013,bakhirkin2015}, invariant inference~\cite{colon2002,bradley2005linear,kroening2010}, ranking-function synthesis~\cite{colon2001,podelski2004,bradley2005linear,leike2014,benamram2014}, recurrence reasoning~\cite{gupta2008,cook2014,larraz2014,bakhirkin2015}, and solver-based back ends~\cite{larraz2014,chen2014,kroening2010}. Although such tools perform well on established benchmarks such as TermCOMP~\cite{termcomp} and SV-COMP~\cite{svcomp}, prior studies show that they still struggle on programs derived from real-world non-termination bugs, where complex program context, nondeterministic inputs, and C/C++-specific features make proof construction difficult~\cite{shi2022,cook2011}. Dynamic approaches complement static analysis by using concrete executions to expose input-specific (non-)termination behaviors and generate witnesses when divergence depends on particular input patterns~\cite{le2020,karmarkar2022,milicevic2023}. However, execution evidence alone cannot establish general termination guarantees, since observing termination on finitely many executions does not imply termination of all executions~\cite{le2020,milicevic2023}.

This paper studies (non-)termination analysis from a preprocessing perspective: rather than proposing a new termination prover, we ask whether lightweight source-level transformations can reshape verification tasks before they reach existing analyzers. We focus on two transformations motivated by the challenges above: loop-based slicing, which isolates loop-relevant structure while preserving the analyzed loop’s termination behavior, and input-driven concretization, which specializes nondeterministic inputs into selected input-scenario variants. These transformations may help analyzers focus on loop-relevant dependencies or reason about restricted input-dependent behaviors, but they may also remove useful context, expose different behaviors, or narrow semantic scope; their effects therefore require empirical evaluation.

We implement this perspective in \tool, a lightweight, tool-independent preprocessing front end for C programs that supports loop-based slicing and input-driven concretization. We evaluate four configurations—\base (no preprocessing), \slice (slicing only), \concr (concretization only), and \sliceconcr (slicing followed by concretization)—across six analyzers: Athena~\cite{athena}, Proton~\cite{metta2024,mukhopadhyay2025}, UAutomizer~\cite{heizmann2014}, AProVE~\cite{giesl2014}, CPAchecker~\cite{beyer2011}, and 2LS~\cite{schrammel2016}. Our study uses 117 C/C++ programs derived from real-world non-termination bugs and their fixes~\cite{shi2022,fse} and examines analyzer correctness, complementarity with original-program analysis, loop-level diagnostics, feature sensitivity, runtime behavior, semantic scope, and integration potential. These dimensions provide practical insight for application developers interpreting preprocessing-based verification outcomes and tool developers improving analyzer robustness.

In summary, this work makes the following contributions:
\begin{enumerate}[leftmargin=*,noitemsep,topsep=0pt]
    \item We design and implement \tool, a lightweight, tool-independent C preprocessing framework that supports loop-based slicing, input-driven concretization, and their combination without modifying backend analyzers.
    \item We conduct a multi-tool empirical evaluation of four configurations---\base, \slice, \concr, and \sliceconcr---across six termination and non-termination analyzers on 117 C/C++ programs derived from real-world non-termination bugs and their fixes.
    \item We assess preprocessing effects on analyzer correctness, complementarity with original-program analysis, loop-level diagnostics, feature sensitivity, runtime behavior, semantic scope, and integration potential.
    \item We derive practical implications for adaptive preprocessing, helping application developers interpret preprocessing-based verification outcomes and tool developers identify feature- and transformation-sensitive cases for improving analyzer robustness.
\end{enumerate}

The remainder of the paper is organized as follows. Section~\ref{section:motivating-examples} presents motivating examples. Section~\ref{section:related-work} reviews related work. Sections~\ref{section:approach} and~\ref{section:implementation} introduce the preprocessing framework and implementation. Sections~\ref{section:experimental-setup} and~\ref{section:results} define the experimental setup and present the empirical results. Section~\ref{section:discussion} discusses the main findings and implications, Section~\ref{section:threats} examines threats to validity, and Section~\ref{section:conclusion} concludes with future work.

\section{Motivating Examples}
\label{section:motivating-examples}
Shi et al.~\cite{shi2022,fse} introduced a benchmark of simplified C/C++ programs derived from real-world non-termination bugs and their fixes. We use two representative cases to illustrate how loop-based slicing and input-driven concretization reshape verification tasks: \texttt{Misusing\_Variable\_Type\_1\_NT} (Figure~\ref{figure:slice}) and \texttt{Incorrect\_Control\_Statement\_2\_NT} (Figure~\ref{figure:cncrt}).

\begin{figure}[!t]
\centering
\begin{subfigure}[t]{0.98\columnwidth}
\centering
\begin{lstlisting}[style=customcpp, firstnumber=1]
int main() {
  unsigned int mul, div1, div2;
  for(div1 = 1; div1 >= 0; div1--) {
    for(div2 = 7; div2 >= 0; div2--) {
      for(mul = 0; mul <= 255; mul++) {
  }}}
  return 0;
}
\end{lstlisting}
\vspace{-8pt}
\caption{Original program.}
\vspace{4pt}
\label{figure:original}
\end{subfigure}
\begin{subfigure}[t]{0.98\columnwidth}
\centering
\begin{lstlisting}[style=customcpp, firstnumber=1]
void main(void) {
  unsigned int div1;
  div1 = (unsigned int)1;
  while (div1 >= (unsigned int)0) div1--;
}
\end{lstlisting}
\vspace{-8pt}
\caption{Outer-loop slice.}
\vspace{4pt}
\label{figure:outer}
\end{subfigure}
\begin{subfigure}[t]{0.98\columnwidth}
\centering
\begin{lstlisting}[style=customcpp, firstnumber=1]
void main(void) {
  unsigned int div2;
  div2 = (unsigned int)7;
  while (div2 >= (unsigned int)0) div2--;
}
\end{lstlisting}
\vspace{-8pt}
\caption{Middle-loop slice.}
\vspace{4pt}
\label{figure:middle}
\end{subfigure}
\begin{subfigure}[t]{0.98\columnwidth}
\centering
\begin{lstlisting}[style=customcpp, firstnumber=1]
void main(void) {
  unsigned int mul;
  mul = (unsigned int)0;
  while (mul <= (unsigned int)255) mul++;
}
\end{lstlisting}
\vspace{-8pt}
\caption{Inner-loop slice.}
\label{figure:inner}
\end{subfigure}
\caption{Motivating example for loop-based slicing: \texttt{Misusing\_Variable\_Type\_1\_NT}, from Shi et al.'s benchmark suite~\cite{shi2022,fse}.}
\label{figure:slice}
\end{figure}

Figure~\ref{figure:original} shows a program with three nested unsigned-integer loops. The two decrementing loops are non-terminating because unsigned wrap-around keeps their guards satisfiable, while the innermost loop is bounded and terminating. Under \base, analyzers must reason about one program coupling all three loop variables and control structure: only Athena and 2LS prove non-termination, UAutomizer, AProVE, and CPAchecker are inconclusive, and Proton incorrectly reports termination. Loop-based slicing (Section~\ref{subsection:slicing}) decomposes the program into loop-centric variants (Figures~\ref{figure:outer}, \ref{figure:middle}, and \ref{figure:inner}), each preserving the termination behavior relevant to one loop. This isolates loop-local reasoning tasks and reduces interference from surrounding loops. On these slices, previously inconclusive analyzers prove termination of the bounded inner loop and non-termination of at least one diverging loop, while Proton no longer reports termination and instead becomes inconclusive. Under the aggregation policy in Algorithm~\ref{algorithm:tool-decision-evaluation}, detecting one non-terminating slice suffices to classify the original non-terminating benchmark correctly. Thus, all analyzers except Proton reach a correct program-level conclusion under slicing, and Proton's earlier incorrect termination result is eliminated. This example shows how slicing localizes divergence reasoning and exposes loop-specific behavior obscured in the original program.

\begin{figure}[!t]
\centering
\begin{lstlisting}[style=customcpp, firstnumber=1, linewidth=0.98\columnwidth]
extern int __VERIFIER_nondet_int(void);
#define EVEBT_EPOLL_SLOTS 2
int main()  {
  int EVENT_EPOLL_TABLES = 10;
  int slots_used[10];
  int ereg[10];
  int table;
  for (int i = 0; i < 10; i++)  {
    slots_used[i] = __VERIFIER_nondet_int();
    ereg[i] = __VERIFIER_nondet_int();
  }
  int i = 0;
  while (i < EVENT_EPOLL_TABLES)  {
    switch (slots_used[i]) {
      case EVEBT_EPOLL_SLOTS:
        continue;
      case 0:
        if (!ereg[i]) return 0;
        else table = ereg[i];
        break;
      default:
        table = ereg[i];
        break;
    }
    if (table) break;
    i++;
  }
  return 0;
}
\end{lstlisting}
\caption{Motivating example for input-driven concretization: \texttt{Incorrect\_Control\_Statement\_2\_NT}, from Shi et al.'s benchmark suite~\cite{shi2022,fse}.}
\label{figure:cncrt}
\end{figure}

Figure~\ref{figure:cncrt} shows a program whose loop behavior depends on arrays initialized with \texttt{\_\_VERIFIER\_nondet\_int()}. The counter \texttt{i} advances only on some control-flow paths: if \texttt{slots\_used[i]} repeatedly takes a particular value, execution bypasses the increment of \texttt{i}, preventing progress and causing non-termination; otherwise, \texttt{i} eventually increases and the loop terminates. Under \base, analyzers must reason about both terminating and diverging behaviors induced by nondeterministic inputs: Proton and UAutomizer prove non-termination, AProVE and 2LS are inconclusive, Athena times out, and CPAchecker reports an error. Input-driven concretization (Section~\ref{subsection:concretization}) specializes nondeterministic inputs by instantiating \texttt{slots\_used} and \texttt{ereg} with concrete values derived from candidate inputs discovered and validated by the procedure described in that section.
The resulting variants restrict analysis to selected input-scenario behaviors, reducing nondeterministic variability and making input-specific termination or divergence easier to expose. After concretization, Athena and 2LS, which did not resolve the original program, resolve several specialized variants, including some classified as non-terminating. Under the aggregation policy in Algorithm~\ref{algorithm:tool-decision-evaluation}, the non-terminating variant results suffice for a correct overall non-termination classification. This example shows how input-driven concretization complements static reasoning by presenting analyzers with restricted execution domains.

Together, these examples show that preprocessing can reshape verification tasks without changing backend analyzers. Loop-based slicing isolates loop structure while preserving its termination behavior, whereas input-driven concretization restricts nondeterministic variability to selected input scenarios. These transformations expose alternative verification views with distinct semantic effects, motivating our empirical study of their benefits, limitations, and interactions across analyzers.


\section{Related Work}
\label{section:related-work}
\begin{figure*}[!t]
    \centering
    \includegraphics[width=0.9\textwidth]{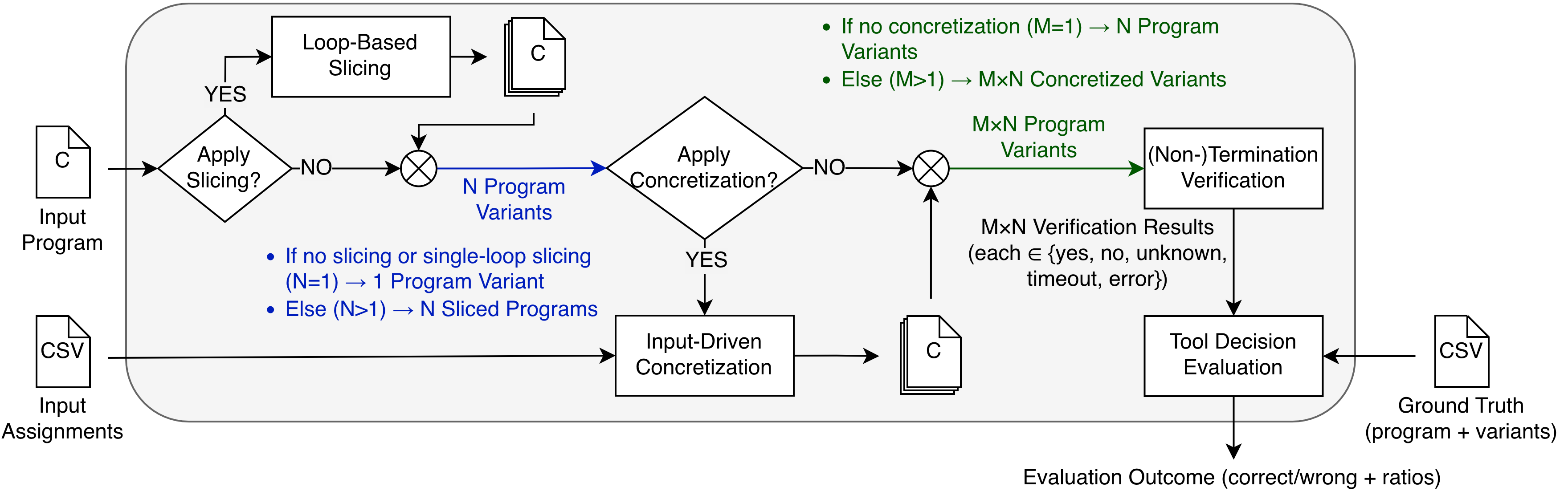}
    \caption{Overview of the \tool preprocessing and verification pipeline.}
    \label{figure:pipeline}
\end{figure*}

We position our work with respect to program slicing and static, dynamic, and hybrid approaches to (non-)termination analysis.

a) {\bf Program Slicing:}
\label{subsection:rw-slicing}
Program slicing extracts components relevant to a slicing criterion, such as selected variable values at a program point~\cite{weiser1981}. Static slicing over-approximates behavior across executions, whereas dynamic slicing is tied to a concrete execution and produces a trace-specific slice that may not generalize beyond the observed run~\cite{weiser1981,bogdan1988}. Several infrastructures support C/C++ slicing at the source or intermediate representation (IR) level: CodeSurfer~\cite{anderson2005} builds system dependence graphs for forward and backward slicing and chopping queries; Frama-C~\cite{cuoq2012} provides source-level slicing integrated with abstract interpretation; DG~\cite{Chalupa2020} performs dependence-graph-based slicing over LLVM bitcode; and Giri~\cite{sahoo2013} performs dynamic backward slicing from execution traces.

Prior work has used slicing mainly for debugging, program comprehension, and verification reduction. In contrast, we use loop-based slicing as a verification-oriented structural isolation step for (non-)termination analysis. Our C source-level implementation uses Frama-C~\cite{cuoq2012}, avoiding IR-level translation and potential decompilation issues while remaining compatible with existing termination analyzers. 

b) {\bf Static (Non-)Termination Analysis:}
\label{subsection:rw-static-tnt}
Termination and non-termination analysis for imperative programs is predominantly static and proof-oriented, aiming to establish guarantees that generalize across executions. We briefly describe the state-of-the-art C analyzers used in our study, which represent different verification paradigms.

Athena~\cite{athena} is a sound C (non-)termination analyzer that models low-level semantics using pointer-to-array rewriting and bit-precise bounded-integer reasoning. It translates programs into logical transition systems and applies $\mu$CLP solving in the MuVal primal--dual fixpoint framework~\cite{unno2021,kura2021,unno2023}, synthesizing ranking functions and recurrent sets through mutually recursive inductive and co-inductive predicates. Proton~\cite{metta2024,mukhopadhyay2025} detects non-termination by finding recurrent program states with bounded model checking and supports termination analysis through LLM-assisted candidate ranking functions validated by randomized testing and bounded model checking. UAutomizer~\cite{heizmann2013,heizmann2014,heizmann2015,heizmann2016,heizmann2018} is an automata-based verifier using trace abstraction and interpolation-based refinement, with (non-)termination reasoning over infinite traces and lasso-shaped executions. AProVE~\cite{giesl2014,falke2014,hensel2016,noll2018} builds symbolic execution graphs that over-approximate executions and translates them into integer transition or rewrite systems for automated (non-)termination analysis. CPAchecker~\cite{beyer2011} is a configurable verification framework that performs reachability analysis over control-flow automata by combining abstract domains such as predicate abstraction and explicit-value analysis. 2LS~\cite{schrammel2016,kaiser2017,kaiser2018,chen2022} is a bit-precise CPROVER-based analyzer combining bounded model checking, k-induction, and template-based invariant synthesis for interprocedural termination and non-termination analysis over bitvector semantics.

Our work is complementary to these analyzers: rather than modifying their reasoning engines or adding proof rules, we treat them as black boxes and study how lightweight source-level preprocessing reshapes their verification tasks. This enables an empirical analysis of when preprocessing helps or hurts, and how its effects depend on analyzer architecture, program features, and the termination versus non-termination task. The findings help application developers interpret preprocessing-based outcomes and tool developers improve analyzer robustness.

c) {\bf Dynamic and Hybrid (Non-)Termination Analyses:}
\label{subsection:rw-dynamic-hybrid-tnt}
Dynamic and hybrid approaches complement static reasoning by using concrete executions to expose or explain termination behavior. DynamiTe~\cite{le2020} collects traces, learns candidate ranking functions and recurrent sets, and validates them with SMT-based static reasoning, using failed validations to drive further executions in a dynamic--static refinement loop. FuzzNT~\cite{karmarkar2022} combines coverage-guided fuzzing with a guess-and-check workflow that builds path-specialized under-approximating variants, analyzed by abstract interpretation to confirm non-termination. EndWatch~\cite{milicevic2023} targets real-world software by instrumenting loops with state-revisit non-termination oracles and exploring executions via fuzzing and symbolic execution.

Our input-driven concretization is related to these hybrid approaches because it uses concrete input scenarios to specialize analysis tasks. However, rather than building a new dynamic--static verification engine, we externalize specialization as source-level preprocessing and evaluate how the resulting variants affect existing analyzers. Combined with loop-based slicing, this lets us study structural isolation and input-scenario specialization as complementary preprocessing dimensions for (non-)termination analysis and guides their integration into verification workflows. 

\section{Approach}
\label{section:approach}
This section presents \tool, the source-level preprocessing pipeline underlying our empirical study.

\subsection{Pipeline Overview}
\label{subsection:overview}
Figure~\ref{figure:pipeline} gives an overview of the \tool preprocessing pipeline. The pipeline takes as input a C program and a tabular input specification whose columns identify nondeterministic input locations and whose rows define concrete input assignments. The first stage optionally applies loop-based slicing and produces $N$ program variants. If slicing is disabled, this stage produces the original program ($N=1$); if the program has a single loop, it produces one sliced variant ($N=1$); and if the program has multiple loops, it produces $N>1$ loop-focused sliced variants. The second stage optionally applies input-driven concretization. If concretization is disabled, the current $N$ variants are analyzed unchanged ($M=1$); otherwise, each variant is specialized using $M>1$ selected input assignments, yielding $M \times N$ concretized variants. Each generated variant is analyzed independently by off-the-shelf termination and non-termination tools, which return one of {\emph{yes}, \emph{no}, \emph{unknown}, \emph{timeout}, \emph{error}}. Algorithm~\ref{algorithm:tool-decision-evaluation} evaluates the resulting outcomes against program- and variant-level ground truth, producing a program-level decision (\emph{correct} or \emph{wrong}) and the fractions of terminating and non-terminating variants correctly classified, denoted $ratio_T$ and $ratio_{NT}$.

\subsection{Loop-Based Slicing}
\label{subsection:slicing}
The loop-based slicing phase isolates each syntactic loop in a separate program variant. Given a program $P$, \tool generates one slice per loop, retaining the code required by the slicing criterion to preserve termination-relevant behavior for that loop while removing unrelated variables, statements, and functions when possible. This gives analyzers a smaller, more focused verification task with less interference from surrounding context. The resulting slices are loop-focused variants interpreted relative to the selected loop and slicing criterion, not claims of whole-program semantic equivalence. In particular, the current slicing procedure targets syntactic loop obligations rather than recursive call cycles. Accordingly, it does not introduce separate slicing criteria or recursion-focused variants for recursive code; such code is considered only insofar as it contributes to the context of a selected loop and the dependences captured by its slicing criterion.

For each \texttt{for}, \texttt{while}, or \texttt{do}--\texttt{while} loop $\ell$ in $P$, \tool uses the Clang~\cite{llvmclang} abstract syntax tree to construct a loop descriptor
\[
D(\ell) =
    \bigl(
        \mathit{loc}(\ell),
        \mathit{func}(\ell),
        T_{\mathit{cond}}(\ell),
        S_{\mathit{ctrl}}(\ell)
    \bigr),
\]
where:
\begin{itemize}[leftmargin=*,noitemsep,topsep=0pt]
    \item $\mathit{loc}(\ell)$ is the source location of $\ell$.
    \item $\mathit{func}(\ell)$ is the function enclosing $\ell$.
    \item $T_{\mathit{cond}}(\ell)$ is the set of lvalues in the loop guard. An lvalue denotes a memory location, such as a scalar variable, array element, or field access. Scalars are recorded directly, while array and field accesses are abstracted to their base memory objects to conservatively capture dependences.
    \item $S_{\mathit{ctrl}}(\ell)$ is the set of source locations of loop-control statements in the loop body, including \texttt{continue}, \texttt{break}, \texttt{goto}, and \texttt{return}.
\end{itemize}

The descriptor induces the loop-local slicing criterion
\[
C(\ell) =
    \operatorname{Read} \! \left( T_{\mathit{cond}}(\ell) \right)
    \cup
    \operatorname{Annotate} \! \left( S_{\mathit{ctrl}}(\ell) \right),
\]
where $\operatorname{Read}(\cdot)$ denotes reads of guard-relevant lvalues and $\operatorname{Annotate}(\cdot)$ denotes ANSI/ISO C Specification Language (ACSL) annotations inserted at the recorded loop-control locations. Operationally, \tool encodes this criterion as Frama-C~\cite{cuoq2012} slicing directives over the selected guard reads and annotated loop-control locations. Frama-C then retains statements needed to preserve the associated data and control dependences.

Loops in \texttt{main} and auxiliary functions are handled differently. If $\ell$ appears in \texttt{main}, \tool invokes Frama-C with \texttt{main} as the entry point and slices directly with respect to $C(\ell)$, producing $\mathsf{Slice}(P,\ell)$. For loops in non-\texttt{main} functions, \tool constructs two slices. The callee-focused slice is rooted at $\mathit{func}(\ell)$ and uses the criterion
\[
C(\ell) \cup \operatorname{KeepReturn} \! \left( \mathit{func}(\ell) \right),
\]
where $\operatorname{KeepReturn}(\cdot)$ is the Frama-C directive used to retain the return behavior of the enclosing function. The caller-focused slice is rooted at \texttt{main} and uses
\[
\operatorname{KeepCalls} \! \left( \mathit{func}(\ell) \right),
\]
where $\operatorname{KeepCalls}(\cdot)$ retains calls to $\mathit{func}(\ell)$ along the invocation context rooted at \texttt{main}. The final slice $\mathsf{Slice}(P,\ell)$ replaces the corresponding function body in the caller-focused slice with the callee-focused reduced version, retaining the relevant invocation context while keeping the generated program focused on $\ell$.

\subsection{Input-Driven Concretization}
\label{subsection:concretization}
The input-driven concretization phase specializes nondeterministic inputs for selected input scenarios. Given a program variant from the previous stage and a tabular input specification, it produces one concretized variant per input assignment by replacing nondeterministic input generators with type-correct concrete values. By reducing nondeterministic variability, concretization can make input-dependent termination or divergence easier to expose. The resulting variants represent selected input-scenario behaviors and therefore do not cover all executions of the original program.

Concrete inputs are specified as a table whose header contains $N$ identifiers $h_1, h_2, \dots, h_N$ and whose $M$ rows define individual input assignments. Each identifier has the form
\[
h_i = x_i :: f_i :: k_i
\qquad
1 \le i \le N,
\]
where $x_i$ is the assigned lvalue expression when a corresponding \texttt{\_\_VERIFIER\_nondet\_*()} call appears on the right-hand side of an initialization or assignment, and the reserved token \texttt{NONDET} otherwise; $f_i$ is the enclosing function; and $k_i \in \mathbb{N}$ distinguishes syntactic occurrences within the function by source order. Each row defines an assignment
\[
t_j = \{ h_1 \mapsto c_{j1}, h_2 \mapsto c_{j2}, \dots, h_N \mapsto c_{jN} \}
\qquad
1 \le j \le M,
\]
where each $c_{ji}$ is a type-correct concrete literal assigned to $h_i$. The resulting input set is denoted $\mathcal{T} = { t_1, t_2, \dots, t_M }$.

Given a program variant $P$ and input assignment $t \in \mathcal{T}$, the concretized program $\mathsf{Cncrt}(P,t)$ specializes every nondeterministic call in $P$ according to $t$. Each call is mapped to its identifier $h = x :: f :: k$ and rewritten by replacing the call expression with the literal $c = t(h)$ while preserving the surrounding expression context. We assume the input specification assigns values to all identifiers, so nondeterministic inputs are fully specialized. In the current implementation, repeated evaluations of the same syntactic nondeterministic call within one execution, including inside loops, use the same concrete value.

Concretization is an intentional under-approximation: $\mathsf{Cncrt}(P,t)$ restricts $P$ to executions consistent with a selected input assignment $t$, rather than representing all nondeterministic choices. Thus, termination of $\mathsf{Cncrt}(P,t)$ holds only for that input scenario and does not imply universal termination of $P$. Conversely, if $\mathsf{Cncrt}(P,t)$ is non-terminating and $t$ is feasible for $P$, it witnesses non-termination of the original program. Concretization can therefore expose input-dependent behavior more effectively, but its conclusions must be interpreted with respect to the selected assignments. We implement concretization as a Clang-based source-to-source transformation that generates one variant for each pair $(P,t)$; if concretization is disabled or no input specification is provided, the program is analyzed unchanged.

Input generation is not a contribution of this work. Input assignments are used only to instantiate selected scenarios for evaluating input-driven concretization. We construct input sets through controlled candidate discovery followed by manual validation. For each benchmark, OpenAI's ChatGPT (GPT-4o) proposes candidate assignments, which are then curated by discarding inconsistent cases and refining execution scenarios into a fixed set of feasible assignments, including both terminating and diverging scenarios when available. This LLM-assisted step is only a practical mechanism for obtaining diverse assignments: the model is not treated as an oracle and is not used to establish ground truth. We also evaluated automated generators such as KLEE~\cite{klee} and AFL~\cite{afl}, but complex C features and timeout-truncated diverging executions often limited the usefulness of automatically derived witnesses in our setting.

\subsection{Termination and Non-Termination Verification}
\label{subsection:verification}
\begin{algorithm*}[!t]
  \caption{Program-Level Tool Decision Evaluation}
  \label{algorithm:tool-decision-evaluation}

  \KwIn{
    Program-level ground truth $GT_P \in \{ \mathrm{T}, \mathrm{NT} \}$; variant-level ground truths $(GT_{V_i})_{1 \le i \le n}$ with $GT_{V_i} \in \{ \mathrm{T}, \mathrm{NT} \}$; and variant-level analyzer outputs $(O_{V_i})_{1 \le i \le n}$ with $O_{V_i} \in \{ \texttt{yes}, \texttt{no}, \texttt{unknown}, \texttt{timeout}, \texttt{error} \}$.
  }

  \KwOut{
    A program-level decision $d \in \{ \texttt{correct}, \texttt{wrong} \}$ and variant-resolution ratios: $(d, ratio_T)$ if $GT_P = \mathrm{T}$, and $(d, ratio_T, ratio_{NT})$ if $GT_P = \mathrm{NT}$.
  }

  \If{$GT_P=\mathrm{T}$}{
    \lIf{$\exists\, i\ (1 \le i \le n).\ O_{V_i}=\texttt{no}$}{
      \Return{$(\texttt{wrong},\bot)$}
    }
    $ratio_T \leftarrow |\{\, i \mid 1 \le i \le n \land O_{V_i}=\texttt{yes} \,\}|\ /\ n$\;
    \Return{$(\texttt{correct},\, ratio_T)$}\;
  }
  \Else{
    \lIf{$\exists\, i\ (1\le i\le n).\ (GT_{V_i}=\mathrm{T}\land O_{V_i}=\texttt{no})\ \lor\ (GT_{V_i}=\mathrm{NT}\land O_{V_i}=\texttt{yes})$}{
      \Return{$(\texttt{wrong},\bot,\bot)$}
    }
    $ratio_T \leftarrow |\{\, i \mid 1 \le i \le n \land GT_{V_i}=\mathrm{T}\land O_{V_i}=\texttt{yes} \,\}|\ /\ |\{\, i \mid 1 \le i \le n \land GT_{V_i}=\mathrm{T} \,\}|$\;
    $ratio_{NT} \leftarrow |\{\, i \mid 1 \le i \le n \land GT_{V_i}=\mathrm{NT}\land O_{V_i}=\texttt{no} \,\}|\ /\ |\{\, i \mid 1 \le i \le n \land GT_{V_i}=\mathrm{NT} \,\}|$\;
    \Return{$(\texttt{correct},\, ratio_T,\, ratio_{NT})$}\;
  }
\end{algorithm*}

We evaluate four analysis configurations that differ only in preprocessing before verification: \base (no preprocessing), \slice (loop-based slicing only), \concr (input-driven concretization only), and \sliceconcr (slicing followed by concretization). These configurations isolate the individual and combined effects of loop-level structural isolation and input-scenario specialization while keeping backend analyzers and the verification workflow unchanged. The combined configuration, \sliceconcr, is included to study interactions between slicing and concretization, not as a prescribed deployment strategy.

Under each configuration, the pipeline generates one or more program variants from the same original program. Each variant is analyzed independently by the selected off-the-shelf termination and non-termination tools. Each analyzer invocation returns one of {\textit{yes}, \textit{no}, \textit{unknown}, \textit{timeout}, \textit{error}}: \textit{yes} denotes a termination proof, \textit{no} a non-termination proof, \textit{unknown} an inconclusive result within the time limit, \textit{timeout} time-limit exceedance, and \textit{error} an internal tool failure. This variant-level workflow lets us evaluate preprocessing effects while treating analyzers as black boxes.

For each tool and configuration, Algorithm~\ref{algorithm:tool-decision-evaluation} aggregates variant-level outputs into a program-level result using the ground truth of both the original program and generated variants. For terminating programs, the result is marked \emph{wrong} if any variant is classified as non-terminating; otherwise, performance is measured by the fraction of variants proven terminating, denoted $ratio_T$. For non-terminating programs, the result is marked \emph{wrong} if any variant-level output contradicts the corresponding variant ground truth; otherwise, performance is measured by the fractions of terminating and non-terminating variants correctly classified, denoted $ratio_T$ and $ratio_{NT}$. Together, these metrics capture program-level correctness and variant-level coverage, enabling principled comparison across tools and preprocessing configurations.

\subsection{Semantic Scope and Preservation Guarantees}
\label{subsection:semantic-preservation}
We formalize the semantic scope of the transformations and their composition in terms of preserved termination-relevant behaviors rather than full program equivalence. The loop-focused notation and preservation claims range over syntactic loops and exclude non-termination caused solely by recursive call cycles without infinite iteration of such constructs.

\paragraph*{Preliminaries}
Let $P$ be a program, $\ell \in \mathit{Loops}(P)$ a syntactic loop, and $\Sigma(P)$ the set of well-defined initial states of $P$. For slicing, let $C(\ell)$ be the loop-local slicing criterion from Section~\ref{subsection:slicing}. Because slicing may remove context that affects reachability of $\ell$ or the states reaching it, we consider only states that are well-defined in both $P$ and its slice and from which $\ell$ is reachable in $P$:
\[
\begin{aligned}
    \Sigma(P,\ell) \triangleq
    \{ \sigma \in \Sigma(P) \cap \Sigma(\mathsf{Slice}(P,\ell)) \mid {} &
    \ell \text{ is reachable in } P \\
    & \text{ from } \sigma \}.
\end{aligned}
\]
We also define a reachability-precedence relation on loops. For loops $\ell',\ell \in \mathit{Loops}(P)$ and state $\sigma$, we write
\[
\ell' \preceq^P_\sigma \ell
\]
if $\ell'=\ell$ or some execution of $P$ from $\sigma$ encounters $\ell'$ before the first encounter of $\ell$. For $\sigma \in \Sigma(P,\ell)$, we write $P,\sigma \Downarrow_\ell$ if every well-defined execution from $\sigma$ performs finitely many iterations of $\ell$, and $P,\sigma \Uparrow_\ell$ if some such execution performs infinitely many iterations. For concretization, let $\mathcal{T}$ be the selected input assignments, each complete and type-correct for the nondeterministic input occurrences in the input specification. For $t \in \mathcal{T}$, $P[t]$ denotes the semantic restriction of $P$ where each specified \texttt{\_\_VERIFIER\_nondet\_*()} outcome is fixed according to $t$, with repeated evaluations of the same syntactic occurrence fixed consistently to the same value. Thus, $P[t]$ represents one selected input scenario. For $\sigma \in \Sigma(P)$, we write $P[t],\sigma \Downarrow$ and $P[t],\sigma \Uparrow$ for termination and divergence under $t$. For combined slicing and concretization, $P[t],\sigma \Downarrow_\ell$ and $P[t],\sigma \Uparrow_\ell$ denote finite and infinite iteration of $\ell$, respectively, under $t$.

\begin{assumption}[Loop-criterion adequacy of generated slices]
\label{assumption:slicing-adequacy}
    For each generated slice $\mathsf{Slice}(P,\ell)$, we assume that the slicing backend preserves the loop-local criterion $C(\ell)$, which comprises the guard-relevant reads of $\ell$, the annotated loop-control locations inside $\ell$, and their data and control dependencies. Thus, executions from states in $\Sigma(P,\ell)$ that reach $\ell$ have corresponding sliced executions with the same retained criterion behavior, and any sliced loop-divergence relevant here arises from behavior retained from $P$, either infinite iteration of $\ell$ itself or divergence of a loop that can be encountered before $\ell$ in $P$. The preservation results below are conditional loop-focused consequences of this adequacy, not claims of whole-program semantic equivalence.
\end{assumption}

\begin{theorem}[Loop-focused non-termination preservation under slicing]
\label{theorem:slicing-nontermination}
Under Assumption~\ref{assumption:slicing-adequacy}, for every program
$P$, loop $\ell \in \mathit{Loops}(P)$, and state
$\sigma \in \Sigma(P,\ell)$:
    \begin{enumerate}[leftmargin=*,noitemsep,topsep=0pt]
        \item $P,\sigma \Uparrow_\ell \Longrightarrow \mathsf{Slice}(P,\ell),\sigma \Uparrow_\ell$;
        \item $\mathsf{Slice}(P,\ell),\sigma \Uparrow_\ell \Longrightarrow \exists \ell' \in \mathit{Loops}(P).\; \ell' \preceq^P_\sigma \ell \wedge \sigma \in \Sigma(P,\ell') \wedge P,\sigma \Uparrow_{\ell'}$.
    \end{enumerate}
\end{theorem}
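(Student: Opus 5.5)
Both parts follow directly from Assumption~\ref{assumption:slicing-adequacy}, read as a simulation between executions of $P$ and executions of $\mathsf{Slice}(P,\ell)$. We fix $\sigma \in \Sigma(P,\ell)$ throughout. The plan treats the assumption as giving two directions. The \emph{forward} direction is the first clause: every execution of $P$ from $\sigma$ that reaches $\ell$ has a corresponding sliced execution with the same retained criterion behavior. The \emph{backward} direction is the attribution clause: any sliced loop-divergence arises from behavior retained from $P$. Part~(1) uses the forward direction and Part~(2) uses the backward one.

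For Part~(1), assume $P,\sigma \Uparrow_\ell$. We pick a well-defined execution $\pi$ of $P$ from $\sigma$ that performs infinitely many iterations of $\ell$; in particular, $\pi$ reaches $\ell$. By the assumption, $\pi$ has a corresponding sliced execution $\pi'$ of $\mathsf{Slice}(P,\ell)$ from $\sigma$, which is well-defined because $\sigma \in \Sigma(\mathsf{Slice}(P,\ell))$. In $\pi'$, the guard-relevant reads $\operatorname{Read}(T_{\mathit{cond}}(\ell))$ and the annotated control points $\operatorname{Annotate}(S_{\mathit{ctrl}}(\ell))$ take the same values and are hit in the same order as in $\pi$. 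We then argue iteration by iteration, by induction on the iteration index $k$. The $k$-th guard evaluation of $\ell$ in $\pi$ is true, and no \texttt{break}, \texttt{return}, or \texttt{goto} out of $\ell$ is taken in iteration $k$. Because the guard lvalues and the control statements belong to the criterion, the same holds in $\pi'$, so $\pi'$ enters iteration $k+1$. Hence $\pi'$ iterates $\ell$ infinitely often, which gives $\mathsf{Slice}(P,\ell),\sigma \Uparrow_\ell$.

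For Part~(2), assume some sliced execution $\pi'$ from $\sigma$ iterates $\ell$ infinitely often. The backward clause of the assumption says this divergence is retained from $P$, and it leaves exactly two cases. In the first case, the divergence is infinite iteration of $\ell$ itself in $P$. Then we take $\ell'=\ell$: we have $\ell \preceq^P_\sigma \ell$ by reflexivity, $\sigma \in \Sigma(P,\ell)$ by hypothesis, and $P,\sigma\Uparrow_\ell$. In the second case, some loop $\ell'$ that can be encountered before the first encounter of $\ell$ in an execution of $P$ from $\sigma$ diverges, so $\ell' \preceq^P_\sigma \ell$ by definition. Here we must still show $\sigma \in \Sigma(P,\ell')$. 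Reachability of $\ell'$ in $P$ from $\sigma$ is immediate, because that execution encounters $\ell'$. Well-definedness of $\sigma$ in $\mathsf{Slice}(P,\ell')$ needs an extra argument. I would note that $\Sigma(\cdot)$ is the set of initial states well-defined for the program's declared inputs, and that slicing only removes code and variables. So a state well-defined for $P$ restricts to a well-defined initial state of any slice of $P$, and in particular $\sigma \in \Sigma(\mathsf{Slice}(P,\ell'))$. Combining these gives $\sigma \in \Sigma(P,\ell')$ and $P,\sigma\Uparrow_{\ell'}$.

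The main obstacle is the step in Part~(2) that turns the informal phrase ``divergence of a loop that can be encountered before $\ell$'' into the precise relation $\preceq^P_\sigma$ together with membership $\sigma\in\Sigma(P,\ell')$. The well-definedness of $\sigma$ in the second slice is exactly what the assumption does not state explicitly. I would handle it with the monotonicity observation above. If that observation proves too strong for the paper's notion of ``well-defined'', the fallback is to read Assumption~\ref{assumption:slicing-adequacy} as also guaranteeing that the attributed loop $\ell'$ is one for which $\sigma$ lies in the considered domain. I would state this reading explicitly as part of invoking the assumption.
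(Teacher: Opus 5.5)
Your proposal follows the paper's own route: part~(1) comes from the forward correspondence in Assumption~\ref{assumption:slicing-adequacy}, and your iteration-by-iteration induction only spells out what the paper states in one line; part~(2) comes from the assumption's backward attribution, split into the case where the divergence is that of $\ell$ itself (take $\ell'=\ell$) and the case of a loop encountered before $\ell$ from $\sigma$. You go beyond the paper only on the conjunct $\sigma\in\Sigma(P,\ell')$, which the paper's sketch passes over silently, and your ``monotonicity'' justification for it is not supported by the paper (its definition of $\Sigma(P,\ell)$ deliberately intersects with $\Sigma(\mathsf{Slice}(P,\ell))$, which would be redundant if slicing preserved well-definedness, and the generated slices are not pure deletions), so your fallback of reading this membership into the assumption is the one consistent with the paper.
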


\begin{proof}[Proof sketch]
    For (1), Assumption~\ref{assumption:slicing-adequacy} ensures that any execution of $P$ from $\sigma$ witnessing infinite iteration of $\ell$ has a corresponding execution in $\mathsf{Slice}(P,\ell)$ with the same retained behavior relevant to $C(\ell)$. Hence, the slice preserves non-termination of the selected loop obligation. For (2), assume that $\mathsf{Slice}(P,\ell)$ has an infinite execution of $\ell$ from $\sigma$. By Assumption~\ref{assumption:slicing-adequacy}, this execution is justified by criterion behavior retained from $P$. If the corresponding original execution reaches $\ell$, the divergence is the selected loop divergence itself. Otherwise, the obstruction occurs earlier, in some loop $\ell'$ encountered before the first encounter of $\ell$ from $\sigma$. Thus, the sliced divergence is accounted for by an original-program divergence in $\ell$ or in a loop preceding $\ell$ from $\sigma$.
\end{proof}

\begin{theorem}[Loop-focused termination preservation under slicing]
\label{theorem:slicing-termination}
    Under Assumption~\ref{assumption:slicing-adequacy}, for every program $P$, loop $\ell \in \mathit{Loops}(P)$, and state $\sigma \in \Sigma(P,\ell)$:
    \begin{enumerate}[leftmargin=*,noitemsep,topsep=0pt]
        \item $\mathsf{Slice}(P,\ell),\sigma \Downarrow_\ell \Longrightarrow P,\sigma \Downarrow_\ell$;
        \item $\bigl(\forall \ell' \in \mathit{Loops}(P).\; \ell' \preceq^P_\sigma \ell \wedge \sigma \in \Sigma(P,\ell') \Rightarrow P,\sigma \Downarrow_{\ell'}\bigr) \Longrightarrow \mathsf{Slice}(P,\ell),\sigma \Downarrow_\ell$.
    \end{enumerate}
\end{theorem}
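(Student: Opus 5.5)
The plan is to derive both parts of Theorem~\ref{theorem:slicing-termination} by contraposition from Theorem~\ref{theorem:slicing-nontermination}. By definition, $\Downarrow_\ell$ and $\Uparrow_\ell$ are complementary for a fixed program and state in the relevant domain. $P,\sigma \Downarrow_\ell$ says every well-defined execution performs finitely many iterations of $\ell$. $P,\sigma \Uparrow_\ell$ says some well-defined execution performs infinitely many. So $\neg(P,\sigma\Downarrow_\ell) \iff P,\sigma\Uparrow_\ell$, and the same holds for $\mathsf{Slice}(P,\ell)$. This is legitimate because $\sigma \in \Sigma(P,\ell)$ is well-defined in both programs. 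Establishing this duality explicitly, for both $P$ and the slice at the same $\sigma$, will be the first step.

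For part~(1), suppose $\mathsf{Slice}(P,\ell),\sigma \Downarrow_\ell$ but not $P,\sigma\Downarrow_\ell$. By the duality, $P,\sigma\Uparrow_\ell$. Theorem~\ref{theorem:slicing-nontermination}(1) then yields $\mathsf{Slice}(P,\ell),\sigma\Uparrow_\ell$, which contradicts the hypothesis.

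For part~(2), assume the hypothesis that $P,\sigma\Downarrow_{\ell'}$ for every $\ell'$ with $\ell'\preceq^P_\sigma \ell$ and $\sigma\in\Sigma(P,\ell')$. Suppose toward contradiction that $\mathsf{Slice}(P,\ell),\sigma\Downarrow_\ell$ fails. By the duality, $\mathsf{Slice}(P,\ell),\sigma\Uparrow_\ell$. Theorem~\ref{theorem:slicing-nontermination}(2) then supplies some $\ell'\in\mathit{Loops}(P)$ with $\ell'\preceq^P_\sigma\ell$, $\sigma\in\Sigma(P,\ell')$, and $P,\sigma\Uparrow_{\ell'}$. This $\ell'$ meets the antecedent of the hypothesis, so $P,\sigma\Downarrow_{\ell'}$. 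The duality, now applied at $\ell'$ and justified by $\sigma\in\Sigma(P,\ell')$, says this is incompatible with $P,\sigma\Uparrow_{\ell'}$, a contradiction. The hypothesis in (2) quantifies over exactly the side conditions that Theorem~\ref{theorem:slicing-nontermination}(2) produces, so nothing further is needed.

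The main obstacle is the duality step, not the contrapositions. Two subtleties need checking. First, the quantification is over \emph{well-defined} executions, so if every execution were undefined, both $\Downarrow$ and $\Uparrow$ could degenerate. I would handle this by noting that $\Sigma(\cdot)$ consists of well-defined initial states, so at least one well-defined execution exists and the two predicates are genuine complements. Second, the duality for $\mathsf{Slice}(P,\ell)$ must be taken at $\sigma$, which is only guaranteed to lie in $\Sigma(P)\cap\Sigma(\mathsf{Slice}(P,\ell))$. Since $\Sigma(P,\ell)$ includes that intersection by definition, this suffices. With these two points recorded, both parts follow purely formally from Theorem~\ref{theorem:slicing-nontermination} under Assumption~\ref{assumption:slicing-adequacy}.
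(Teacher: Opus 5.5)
Your proposal is correct and follows the paper's own argument: both parts come by contraposition from the corresponding parts of Theorem~\ref{theorem:slicing-nontermination}, and the hypothesis of (2) excludes exactly the preceding-loop divergences that part (2) of that theorem supplies. Making the duality explicit is a reasonable addition the paper leaves implicit, but it needs no argument that some well-defined execution exists: $\Downarrow_\ell$ quantifies universally and $\Uparrow_\ell$ existentially over the same set of well-defined executions, so each is literally the negation of the other even when that set is empty.
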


\begin{proof}[Proof sketch]
    For (1), we argue by contraposition using Theorem~\ref{theorem:slicing-nontermination}(1). Any infinite iteration of $\ell$ in $P$ would be preserved as divergence of the corresponding loop-focused obligation in $\mathsf{Slice}(P,\ell)$. Hence, termination of that obligation in the slice implies termination of $\ell$ in $P$. For (2), contraposition with Theorem~\ref{theorem:slicing-nontermination}(2) shows that any infinite loop-focused execution in the slice would be accounted for by divergence of some original-program loop preceding $\ell$ from $\sigma$. Since the premise rules out all such divergences, the selected loop-focused obligation in the slice must terminate. These are loop-focused claims, not whole-program equivalence claims.
\end{proof}

\begin{theorem}[Scenario-level correctness of concretization]
\label{theorem:concretization}
    For every program $P$, complete and type-correct input assignment $t \in \mathcal{T}$, and initial state $\sigma \in \Sigma(P)$:
    \begin{enumerate}[leftmargin=*,noitemsep,topsep=0pt]
        \item $P[t],\sigma \Uparrow \Longleftrightarrow \mathsf{Cncrt}(P,t),\sigma \Uparrow$;
        \item $P[t],\sigma \Downarrow \Longleftrightarrow \mathsf{Cncrt}(P,t),\sigma \Downarrow$.
    \end{enumerate}
\end{theorem}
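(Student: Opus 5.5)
The plan is a lockstep simulation between $P[t]$ and $\mathsf{Cncrt}(P,t)$ from the same initial state $\sigma$. I would first set up the relation: a configuration of $P[t]$ is related to a configuration of $\mathsf{Cncrt}(P,t)$ when they agree on program state and control location, using the obvious correspondence between statements of $P$ and their rewritten counterparts. By construction in Section~\ref{subsection:concretization}, $\mathsf{Cncrt}(P,t)$ is syntactically identical to $P$ except that each call expression \texttt{\_\_VERIFIER\_nondet\_*()} with identifier $h = x::f::k$ is replaced in place by the literal $t(h)$, and the surrounding expression context is preserved. Since $t$ is complete, every nondeterministic occurrence is replaced.

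The key lemma is a one-step correspondence. A step of $P[t]$ from a related pair of configurations is matched by exactly one step of $\mathsf{Cncrt}(P,t)$ reaching a related pair, and conversely. I would prove this by case analysis on the statement being executed.

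\begin{itemize}
\item For statements containing no nondeterministic call, the two programs have identical code and the steps coincide.
\item For expressions containing an occurrence $h$, $P[t]$ by definition fixes the call's outcome to $t(h)$ on every evaluation, consistent with repeated evaluations. The literal $t(h)$ evaluates to that same value, and because $t(h)$ is type-correct, no implicit conversion behaves differently.
\end{itemize}

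For this to work, the calls must be side-effect free apart from producing the value. That holds for the \texttt{\_\_VERIFIER\_nondet\_*} functions, and I would state it explicitly. Well-definedness is also preserved in both directions, since the set of reachable states is the same.

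By induction on length, the lemma lifts to a bijection between finite execution prefixes, and hence between maximal executions, that preserves length and finiteness. Item (1) follows directly: an infinite execution of one program maps to an infinite execution of the other. Item (2) follows because $\Downarrow$ means every well-defined execution is finite. Equivalently, (2) is the negation of (1), provided that under $t$ both programs are deterministic up to any residual nondeterminism, which is itself matched by the bijection.

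I expect the main obstacle to be making the identifier mapping rigorous. The semantic restriction $P[t]$ keys outcomes by syntactic occurrence $x::f::k$, and the transformation must resolve exactly the same occurrences. In particular, calls whose left-hand side is not an lvalue are labelled \texttt{NONDET}, and the source-order index $k$ must be stable under the Clang rewriting. I would handle this by defining $P[t]$ directly in terms of the same occurrence enumeration the rewriter uses, so the correspondence holds by construction.
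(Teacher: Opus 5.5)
Your proposal is correct and follows the same route as the paper's proof sketch. The paper argues that $\mathsf{Cncrt}(P,t)$ syntactically realizes the restriction $P[t]$: it keeps declarations, expression contexts, control flow, and evaluation order, and repeated evaluations of an occurrence are fixed consistently. Hence executions from the same $\sigma$ are step-for-step identical and have the same terminating and diverging behavior. Your one-step simulation lemma, the side-effect freedom of the \texttt{\_\_VERIFIER\_nondet\_*} calls, and the occurrence-identifier correspondence spell out what the paper leaves implicit. The determinism proviso in your aside for deriving (2) from (1) is unnecessary, since $\Downarrow$ (every well-defined execution is finite) is by definition the negation of $\Uparrow$ (some such execution is infinite).
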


\begin{proof}[Proof sketch]
    The semantic program $P[t]$ is $P$ restricted to the selected input assignment $t$. The transformation $\mathsf{Cncrt}(P,t)$ implements this restriction syntactically by replacing each specified \texttt{\_\_VERIFIER\_nondet\_*()} occurrence with the corresponding type-correct literal from $t$, while preserving declarations, expression contexts, control flow, and evaluation order. Because $t$ fixes all specified nondeterministic inputs, including repeated evaluations of the same syntactic occurrence, executions of $P[t]$ and $\mathsf{Cncrt}(P,t)$ from the same initial state are step-for-step identical. Thus, under $t$, the two programs have the same terminating and diverging executions.
\end{proof}

\begin{corollary}[Scenario-level loop-focused preservation under slicing and concretization]
\label{corollary:combination}
    Under Assumption~\ref{assumption:slicing-adequacy}, for every program $P$, loop $\ell \in \mathit{Loops}(P)$, input assignment $t \in \mathcal{T}$, and state $\sigma \in \Sigma(P,\ell)$:
    \begin{enumerate}[leftmargin=*,noitemsep,topsep=0pt]
        \item $P[t],\sigma \Uparrow_\ell \Longrightarrow \mathsf{Cncrt}(\mathsf{Slice}(P,\ell),t),\sigma \Uparrow_\ell$;
        \item $\mathsf{Cncrt}(\mathsf{Slice}(P,\ell),t),\sigma \Uparrow_\ell \Longrightarrow \exists \ell' \in \mathit{Loops}(P).\; \ell' \preceq^P_\sigma \ell \wedge \sigma \in \Sigma(P,\ell') \wedge P[t],\sigma \Uparrow_{\ell'}$;
        \item $\mathsf{Cncrt}(\mathsf{Slice}(P,\ell),t),\sigma \Downarrow_\ell \Longrightarrow P[t],\sigma \Downarrow_\ell$;
        \item $\bigl(\forall \ell' \in \mathit{Loops}(P).\; \ell' \preceq^P_\sigma \ell \wedge \sigma \in \Sigma(P,\ell') \Rightarrow P[t],\sigma \Downarrow_{\ell'}\bigr) \Longrightarrow \mathsf{Cncrt}(\mathsf{Slice}(P,\ell),t),\sigma \Downarrow_\ell$.
    \end{enumerate}
\end{corollary}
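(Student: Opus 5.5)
The plan is to obtain Corollary~\ref{corollary:combination} by composing Theorem~\ref{theorem:concretization} with Theorems~\ref{theorem:slicing-nontermination} and~\ref{theorem:slicing-termination}, applied to the restricted program $P[t]$ rather than to $P$. Fix $P$, $\ell$, $t \in \mathcal{T}$ and $\sigma \in \Sigma(P,\ell)$.

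The first step is a commutation observation. Since $t$ only fixes the outcomes of \texttt{\_\_VERIFIER\_nondet\_*()} occurrences, slicing $P$ and then restricting to $t$ agrees with slicing the restricted program:
\[
\mathsf{Slice}(P,\ell)[t] = \mathsf{Slice}(P[t],\ell).
\]
This holds because the nondeterministic occurrences retained in the slice are a subset of those in $P$, and $t$, being complete, assigns each of them a value. I would argue that Assumption~\ref{assumption:slicing-adequacy} transfers to $P[t]$. The reason is that the correspondence between executions it guarantees is stated per execution, and the executions of $P[t]$ are exactly the executions of $P$ whose nondeterministic outcomes agree with $t$. So the retained criterion behavior, and the sliced execution corresponding to each original one, remain consistent with $t$. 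With this in hand, Theorems~\ref{theorem:slicing-nontermination} and~\ref{theorem:slicing-termination} hold with $P[t]$ in place of $P$.

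Next I would establish a loop-focused version of Theorem~\ref{theorem:concretization}. Applying its step-for-step argument to the program $\mathsf{Slice}(P,\ell)$ shows that executions of $\mathsf{Slice}(P,\ell)[t]$ and $\mathsf{Cncrt}(\mathsf{Slice}(P,\ell),t)$ from $\sigma$ coincide. Because the executions coincide, the counts of iterations of $\ell$ are equal, and therefore
\[
\mathsf{Cncrt}(\mathsf{Slice}(P,\ell),t),\sigma \Uparrow_\ell \iff \mathsf{Slice}(P,\ell)[t],\sigma \Uparrow_\ell,
\]
with the analogous equivalence for $\Downarrow_\ell$.

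The four items then follow by chaining. Item (1) comes from Theorem~\ref{theorem:slicing-nontermination}(1) for $P[t]$ followed by the equivalence above. Item (2) runs the same equivalence backwards and then applies Theorem~\ref{theorem:slicing-nontermination}(2), which yields a loop $\ell'$ with $P[t],\sigma \Uparrow_{\ell'}$. Items (3) and (4) follow in the same way from Theorem~\ref{theorem:slicing-termination}(1) and~(2).

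The main obstacle is the side conditions on states and precedence. The sets $\Sigma(P,\ell)$ and the relation $\preceq^P_\sigma$ are defined with respect to unrestricted $P$, whereas the slicing theorems for $P[t]$ would naturally use reachability of $\ell$ and the order $\preceq^{P[t]}_\sigma$ under $t$. Two points have to be checked.
\begin{itemize}
\item For item (2), every execution of $P[t]$ is also an execution of $P$, so the inclusion $\preceq^{P[t]}_\sigma \subseteq \preceq^{P}_\sigma$ holds and the witness $\ell'$ produced for $P[t]$ also satisfies $\ell' \preceq^P_\sigma \ell$.
\item For item (4), the premise quantifies over the larger relation $\preceq^P_\sigma$, so it implies the premise required in the $P[t]$ version.
\end{itemize}
The remaining subtlety is that $\ell$ might be unreachable from $\sigma$ under $t$. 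In that case $\mathsf{Slice}(P,\ell)[t]$ performs zero iterations of $\ell$ and both sides of each item become vacuous or trivially true. I would treat this case directly, by appeal to the correspondence in Assumption~\ref{assumption:slicing-adequacy}, rather than through the theorems.
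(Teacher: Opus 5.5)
Your route is essentially the paper's. You identify $\mathsf{Cncrt}(\mathsf{Slice}(P,\ell),t)$ with the $t$-restricted slice via the step-for-step argument of Theorem~\ref{theorem:concretization}, and you instantiate Theorems~\ref{theorem:slicing-nontermination} and~\ref{theorem:slicing-termination} within scenario $t$. The inclusion $\preceq^{P[t]}_\sigma \subseteq \preceq^{P}_\sigma$ you use for items (2) and (4) is a detail the paper leaves implicit.

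\textbf{The gap.} The step that fails is your treatment of the case where $\ell$ is unreachable from $\sigma$ under $t$. You claim that $\mathsf{Slice}(P,\ell)[t]$ then performs zero iterations of $\ell$, so all four items are vacuous or trivial. Assumption~\ref{assumption:slicing-adequacy} does not give this. It supplies corresponding sliced executions only for original executions that reach $\ell$. Otherwise it only requires sliced divergence to be explained by a loop met before $\ell$. The whole reason item (2) carries a witness $\ell'$ is that the slice may discard context that blocks $\ell$.

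For example, let $x$ be nondeterministic, and let a loop $\ell_0$ with guard $x > 0$ and empty body be followed by a diverging loop $\ell$ whose guard does not depend on $x$. Let $t$ fix $x = 1$. Then $\sigma \in \Sigma(P,\ell)$ and $\ell$ is never reached in $P[t]$. Yet the slice for $\ell$, which need not retain $\ell_0$, diverges in $\ell$ under $t$. So the premise of item (2) holds, and its conclusion requires the genuine witness $\ell' = \ell_0$. Here $\ell_0 \preceq^P_\sigma \ell$ is witnessed only by an execution with $x \le 0$, outside scenario $t$, which is why the corollary is phrased with $\preceq^P_\sigma$. Item (4) is not trivial in this case either, since its conclusion concerns the slice, not $P[t]$. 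Only items (1) and (3) are trivial there.

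\textbf{The repair.} Do not instantiate the theorems at $P[t]$, which needs $\ell$ reachable under $t$. Instead, rerun the per-execution obstruction argument of Theorem~\ref{theorem:slicing-nontermination}(2) inside scenario $t$. An execution of $\mathsf{Cncrt}(\mathsf{Slice}(P,\ell),t)$ that iterates $\ell$ forever is, step for step, an execution of $\mathsf{Slice}(P,\ell)$ consistent with $t$. By the second clause of Assumption~\ref{assumption:slicing-adequacy}, transferred to $t$ as you propose, it is explained by infinite iteration of $\ell$, or of some $\ell'$ encountered before $\ell$, in an original execution consistent with $t$. This yields $\ell' \preceq^P_\sigma \ell$, $\sigma \in \Sigma(P,\ell')$ and $P[t],\sigma \Uparrow_{\ell'}$ without any case split. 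Item (4) is then the contrapositive of item (2).

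\textbf{A separate point.} Drop the equation $\mathsf{Slice}(P,\ell)[t] = \mathsf{Slice}(P[t],\ell)$. Read syntactically it is false: the paper notes that \concrslice can produce different slices, because concretization changes the reachability and dependence information the slicer uses. It is also not what you need. What you need, and what the paper also assumes, is twofold:
\begin{itemize}
\item the correspondence of Assumption~\ref{assumption:slicing-adequacy} agrees on retained nondeterministic outcomes;
\item retained occurrences keep their identifiers $x :: f :: k$ in the slice.
\end{itemize}
The second is an assumption, not a consequence of $t$ being complete, since source-order indices shift when earlier occurrences are sliced away.
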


\begin{proof}[Proof sketch]
    The combined variant $\mathsf{Cncrt}(\mathsf{Slice}(P,\ell),t)$ is the loop-focused slice under the fixed input assignment $t$. By Theorem~\ref{theorem:concretization}, concretization preserves exactly the executions of the selected scenario. By Assumption~\ref{assumption:slicing-adequacy}, the slice preserves the criterion behavior needed for the selected loop obligation, with retained nondeterministic occurrences remaining compatible with the concretization mapping. Instantiating Theorems~\ref{theorem:slicing-nontermination} and~\ref{theorem:slicing-termination} within scenario $t$ yields the four claims. Thus, divergence of the selected original loop obligation under $t$ is preserved in the sliced-and-concretized variant, while divergence of the combined variant is reflected, under the same assignment, by divergence of $\ell$ itself or of a loop that can be encountered before $\ell$ from $\sigma$. The termination claims follow by the same contrapositive reasoning as in Theorem~\ref{theorem:slicing-termination}. Hence, the result is scenario-level and loop-focused, not a whole-program equivalence claim.
\end{proof}

\section{Implementation}
\label{section:implementation}
We implemented \tool as a C++23 source-level preprocessing framework for C programs, using Clang LibTooling for source-to-source transformations and Frama-C for slicing, without modifying downstream (non-)termination analyzers.

Loop-based slicing uses a Clang front end to identify \texttt{for}, \texttt{while}, and \texttt{do}--\texttt{while} loops, extract loop descriptors, and emit slicing metadata. It annotates loop-control statements (\texttt{continue}, \texttt{break}, \texttt{goto}, and \texttt{return}) to preserve termination-relevant control flow. The slicing backend invokes Frama-C in batch mode, running Evolved Value Analysis (EVA) with the slicing plugin to generate compilable sliced variants. For loops outside \texttt{main}, caller- and callee-context slices are merged through lightweight source transformations to retain relevant interprocedural context. ACSL annotations are removed after slicing for analyzers that do not accept ACSL.

Input-driven concretization is a Clang-based specialization pass. Given a tabular input specification, \tool replaces each \texttt{\_\_VERIFIER\_nondet\_*()} occurrence with the corresponding type-correct literal for a selected input assignment while preserving declarations and control structure. It produces one variant per input assignment, can be applied to original or sliced programs, and serves as input-scenario specialization rather than standalone input generation.

A top-level driver orchestrates preprocessing and analysis: for each program, it applies the selected configuration, generates variants, and runs the configured analyzers under fixed resource limits. It normalizes outcomes to \emph{yes}, \emph{no}, \emph{unknown}, \emph{timeout}, and \emph{error}, records wall-clock runtimes, generates machine-readable logs, and implements Algorithm~\ref{algorithm:tool-decision-evaluation} to aggregate outcomes against program- and variant-level ground truth, producing program-level decisions and $ratio_T$ and $ratio_{NT}$ values.

\section{Experimental Setup}
\label{section:experimental-setup}
This section describes the experimental setup, including benchmarks, tools, configurations, measurements, and research questions.

\subsection{Benchmarks}
\label{benchmarks}
We evaluate our approach on the benchmark suite introduced by Shi~et~al.~\cite{shi2022,fse}, which contains 117 simplified C/C++ programs derived from real-world OSS non-termination bugs: 56 non-terminating cases and 61 corresponding terminating fixes. The suite was constructed using explicit filtering and simplification criteria and includes program patterns that are challenging for many analyzers, such as pointers, arrays, structured data, bitwise operations, bounded arithmetic, and recursion. Accordingly, some tools in our study do not aim to provide sound (non-)termination guarantees for all benchmark patterns, consistent with observations by Shi~et~al.\footnote{Athena was designed to soundly support the semantic setting of the Shi~et~al. benchmark suite.} 
The benchmark provides program-level ground-truth labels\footnote{We use corrected labels for two mislabeled benchmarks: \texttt{Incorrect\_Initialization\_2\_T} (non-terminating) and \texttt{Signed\_Overflow\_Error\_1\_NT} (terminating).} for termination and non-termination.

Our preprocessing pipeline generates additional variants through slicing and input-driven concretization. For generated variants, we assign labels according to the semantics of the corresponding program instance, enabling consistent aggregation of verification outcomes across variants. For concretization, we generate ten input assignments per program. For non-terminating programs, these assignments cover both terminating and diverging executions when feasible, using five terminating and five non-terminating scenarios where such a split can be established; for terminating programs, all assignments correspond to terminating executions. This controlled construction exposes distinct execution domains while preserving comparability across programs.

\subsection{Tools}
\label{tools}
We evaluate six widely studied (non-)termination analyzers: Athena, Proton, UAutomizer, AProVE, CPAchecker, and 2LS. Together, they span diverse verification paradigms, including logic-based fixpoint reasoning, bounded-model-checking witness generation, trace abstraction with interpolation refinement, symbolic-execution-based termination analysis, configurable program analysis, and bounded model checking with k-induction and template-based invariant synthesis. Experiments were conducted on an Apple M2 Pro machine with a 10-core CPU and 16~GB RAM, running macOS~15.6.1 and 64-bit Linux Docker containers to provide a consistent execution environment. Tool configurations follow prior evaluations on this benchmark, and all tools use a uniform five-minute timeout per analyzed variant.

\subsection{Configurations}
\label{configurations}
For each program and analyzer, we evaluate four configurations that differ only in the preprocessing applied by \tool, with backend analyzers and parameters unchanged:

\begin{enumerate}[leftmargin=*,noitemsep,topsep=0pt]
    \item \base: analysis of the original program
    \item \slice: analysis of loop-based slices
    \item \concr: analysis of concretized variants
    \item \sliceconcr: analysis of variants obtained by applying slicing before concretization
\end{enumerate}

These configurations isolate the individual and combined effects of structural isolation and semantic specialization. 
The reverse order, \concrslice, may produce different slices because concretization can change reachability and dependence information. However, we evaluate \sliceconcr to avoid recomputing expensive loop-targeted slices for each input assignment.

\subsection{Measurements}
\label{outcomes-and-measurements}
For each analyzed program variant, we record the analyzer outcome---\emph{yes}, \emph{no}, \emph{unknown}, \emph{timeout}, or \emph{error}---and the wall-clock time, including preprocessing and backend analysis. Outcomes are aggregated following Algorithm~\ref{algorithm:tool-decision-evaluation}, yielding a program-level decision and the variant-resolution ratios $\mathit{ratio}_T$ and $\mathit{ratio}_{NT}$. For runtime, we report three per-program statistics: \emph{Total Variant Time (TVT)}, the cumulative runtime over generated variants; \emph{Average Variant Time (AVT)}, TVT divided by the number of variants; and \emph{Median Variant Time (MVT)}, which reduces sensitivity to outliers. To compare the cost of successful verification, we compute timing statistics over correctly solved programs. We also report the \emph{Solved Ratio (SR)}, the proportion of programs included in the timing analysis.

\subsection{Research Questions}
\label{research-questions}
Our evaluation is guided by the following research questions.

\begin{itemize}[leftmargin=*,noitemsep,topsep=0pt]
    \item \textbf{RQ1 (Accuracy Impact).}
    For each verification tool, to what extent do \slice, \concr, and \sliceconcr change the number of correctly classified (non-)terminating programs relative to \base, and what factors explain these changes?

    \item \textbf{RQ2 (Complementarity to \base).}
    To what extent do \slice, \concr, and \sliceconcr complement \base by correctly handling programs for which \base yields an incorrect or inconclusive outcome?
    
    \item \textbf{RQ3 (Loop-Level Localization of (Non-)Termination).}
    To what extent does \slice localize an inconclusive program-level outcome to specific loops by producing conclusive (non-)termination results for the remaining loops?

    \item \textbf{RQ4 (Feature Sensitivity).}
    Which structural and data-type characteristics are associated with successful (non-)termination analysis under \base, \slice, \concr, and \sliceconcr?
    
    \item \textbf{RQ5 (Incorrect Classification Reduction).}
    Do \slice, \concr, and \sliceconcr reduce incorrect (non-)termination classifications relative to \base?
    
    \item \textbf{RQ6 (Efficiency and Runtime Impact).}
    Compared to \base, how do \slice, \concr, and \sliceconcr affect TVT, AVT, and MVT for correctly solved programs, and how should these costs be interpreted with respect to SR?

    \item \textbf{RQ7 (Tractability--Generality Trade-offs).}
    What trade-offs arise between detectability, solver tractability, and the semantic scope over which correctness or divergence is established under \slice, \concr, and \sliceconcr?

    \item \textbf{RQ8 (Integration Potential).}
    Do the empirical effects of \slice, \concr, and \sliceconcr justify their integration into \base verification, and if so, how should they be incorporated?
\end{itemize}

\section{Results}
\label{section:results}
This section presents the empirical findings organized according to the research questions.

\begin{figure*}[!t]
\centering
\begin{subfigure}[t]{\textwidth}
\centering
\includegraphics[width=\textwidth]{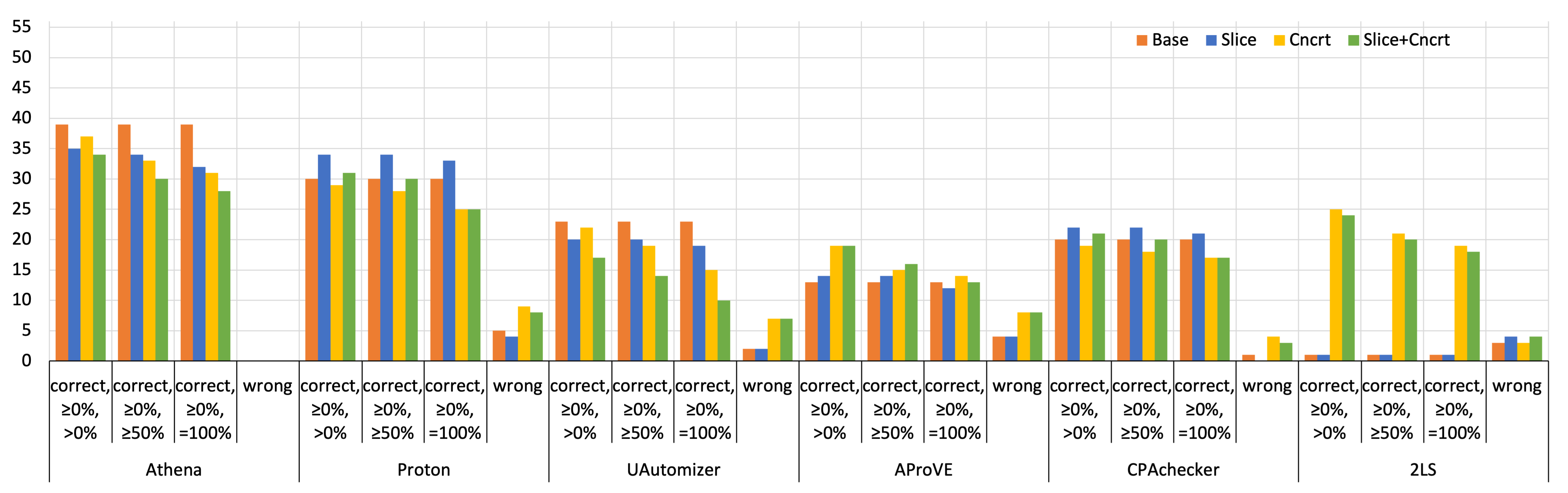}
\caption{Results for non-terminating benchmarks. $Correct,ratio_T\geq0\%,ratio_{NT}>0\%$ denotes program-level non-termination detection, since at least one non-terminating variant is classified as non-terminating. $Correct,ratio_T\geq0\%,ratio_{NT}\geq50\%$ and $Correct,ratio_T\geq0\%,ratio_{NT}=100\%$ denote the same program-level detection with stronger variant-level coverage, where at least half or all non-terminating variants are classified as non-terminating. $Wrong$ denotes a variant-level classification inconsistent with ground truth.}
\vspace{8pt}
\label{figure:rq1-nonterm}
\end{subfigure}
\begin{subfigure}[t]{\textwidth}
\centering
\includegraphics[width=\textwidth]{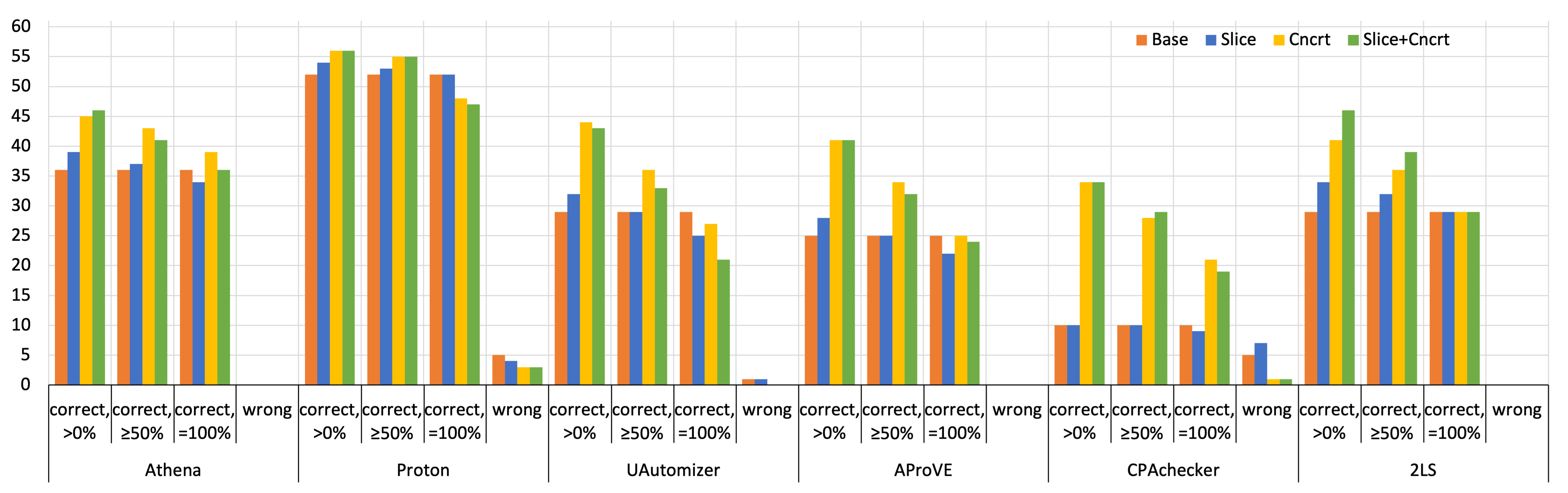}
\caption{Results for terminating benchmarks. $Correct,ratio_T=100\%$ denotes program-level termination resolution, since all generated variants are classified as terminating. $Correct,ratio_T\geq50\%$ and $Correct,ratio_T>0\%$ denote partial variant-level resolution, where at least half or at least one generated variant is classified as terminating, but program-level termination remains unresolved. $Wrong$ denotes a variant-level classification inconsistent with ground truth.}
\label{figure:rq1-term}
\end{subfigure}
\caption{Accuracy under preprocessing configurations. Bars report the number of benchmarks satisfying the corresponding outcome criterion for each analyzer and preprocessing configuration. Results for \concr and \sliceconcr are evaluated over selected input-scenario variants, not over all nondeterministic executions of the original program.}
\label{figure:rq1}
\end{figure*}

\subsection{RQ1: Accuracy Impact}
Figure~\ref{figure:rq1} compares outcome distributions across preprocessing configurations for each analyzer. For correct classifications, we focus on $Correct, ratio_T \geq 0\%, ratio_{NT} > 0\%$ for non-termination and $Correct, ratio_T = 100\%$ for termination; the remaining thresholds capture variant-level coverage.

The effect of preprocessing is strongly analyzer- and task-dependent: no configuration consistently improves accuracy across all analyzers or both verification tasks. For non-termination under \slice, Proton (\base 30 $\rightarrow$ \slice 34, +4), CPAchecker (+2), and AProVE (+1) improve, whereas Athena ($-4$) and UAutomizer ($-3$) decrease, and 2LS is unchanged. Under \concr, 2LS shows the largest gain (\base 1 $\rightarrow$ \concr 25, +24), followed by AProVE (+6), while the remaining tools decrease by $-1$ or $-2$. The task also matters: positive effects are more frequent for non-termination than for termination, with \slice improving three analyzers on non-termination---Proton (+4), CPAchecker (+2), and AProVE (+1)---but none on full termination resolution, and \sliceconcr improving four analyzers on non-termination---2LS (+23), AProVE (+6), Proton (+1), and CPAchecker (+1)---whereas only CPAchecker improves on termination (+9). These results suggest that preprocessing benefits depend on interactions among transformed programs, analyzer-specific abstractions and heuristics, and the reasoning demands of termination versus non-termination, so the same transformation may help some analyzers or tasks while having little effect or reducing accuracy for others.

Configurations containing \concr produce the largest improvements, but only for specific analyzer--task pairs. For non-termination, the largest gains are for 2LS, which increases from 1 under \base to 25 under \concr (+24) and 24 under \sliceconcr (+23), followed by AProVE, which improves from 13 to 19 under both configurations (+6). For termination, CPAchecker has the largest gains, increasing from 10 under \base to 21 under \concr (+11) and 19 under \sliceconcr (+9). Other tools show smaller gains, no change, or marginal decreases. These results suggest that \concr can help certain analyzers classify generated instances by reducing nondeterministic variability and specializing analysis to selected input-scenario variants.

In contrast, \slice produces smaller and more mixed changes. For non-termination, changes range from $-4$ to $+4$: Proton (\base 30 $\rightarrow$ \slice 34, +4), CPAchecker (20 $\rightarrow$ 22, +2), and AProVE (13 $\rightarrow$ 14, +1) improve, while Athena (39 $\rightarrow$ 35, $-4$) and UAutomizer (23 $\rightarrow$ 20, $-3$) decrease. For termination, \slice does not increase the number of fully resolved programs for any analyzer; the changes are either small decreases for UAutomizer ($-4$), AProVE ($-3$), Athena ($-2$), and CPAchecker ($-1$), or no change for Proton and 2LS. Compared with \concr-based configurations, \slice has a smaller effect, suggesting that loop-focused structural isolation can help some analyzers but remains limited because it preserves broader loop-relevant behavior and nondeterminism.

Finally, the combination \sliceconcr is not consistently additive. Compared with \slice, it improves 2 of 6 analyzers on non-termination---2LS (\slice 1 $\rightarrow$ \sliceconcr 24, +23) and AProVE (14 $\rightarrow$ 19, +5)---and 3 of 6 on termination---CPAchecker (+10), Athena (+2), and AProVE (+2). Compared with \concr, however, it improves only Proton and CPAchecker on non-termination (+2 each), improves none on termination, and decreases several analyzer--task pairs. Thus, structural isolation and input-scenario specialization are not necessarily additive: depending on analyzer abstractions, heuristics, and reasoning mechanisms, their combination may help or reduce the number of correctly classified programs.

\begin{rqtakeaway}
\textbf{RQ1 Takeaway.}
Preprocessing can improve accuracy, but gains are concentrated in specific analyzer--task pairs rather than being uniform. The largest improvements come from configurations containing \concr, while \slice produces more modest changes and \sliceconcr is not consistently additive.
\end{rqtakeaway}

\begin{table*}[!t]
\centering
\footnotesize
\setlength{\tabcolsep}{7pt}
\caption{Complementarity to the \base\ configuration. Rows denote the classification produced by \base, and columns denote the classification produced by the applied preprocessing configuration. Each entry reports the number of benchmarks in the corresponding row--column combination for a given analyzer. Transitions from U or W to S indicate recoveries, i.e., benchmarks unresolved or incorrectly classified by \base but solved after preprocessing, and quantify the complementarity provided by preprocessing. Transitions from S to U or W indicate losses, i.e., benchmarks solved by \base that become unresolved or incorrectly classified after preprocessing. Results for \concr\ and \sliceconcr\ are evaluated over selected input-scenario variants, not over all nondeterministic executions of the original program.}
\label{table:rq2}
\begin{subtable}[t]{\textwidth}
\centering
\caption{Results for non-terminating benchmarks. S (Solved) denotes program-level non-termination detection, i.e., at least one non-terminating variant is classified as non-terminating ($Correct,ratio_T\geq0\%,ratio_{NT}>0\%$). U (Unknown) denotes program-level unresolved cases, i.e., no non-terminating variant is classified as non-terminating ($Correct,ratio_T\geq0\%,ratio_{NT}=0\%$). W (Wrong) denotes a program-level classification inconsistent with the benchmark ground truth.}
\vspace{-4pt}
\label{table:rq2-nonterm}
\begin{tabular}{l|c|ccc|ccc|ccc|ccc|ccc|ccc}
\noalign{\hrule height 1pt}
\multicolumn{1}{c|}{\multirow{2}{*}{\shortstack[c]{\textbf{Applied}\\\textbf{Configuration}}}} &
\multicolumn{1}{c|}{\multirow{2}{*}{\shortstack[c]{\textbf{\base}\\\textbf{Classification}}}} &
\multicolumn{3}{c|}{\textbf{Athena}} &
\multicolumn{3}{c|}{\textbf{Proton}} &
\multicolumn{3}{c|}{\textbf{UAutomizer}} &
\multicolumn{3}{c|}{\textbf{AProVE}} &
\multicolumn{3}{c|}{\textbf{CPAchecker}} &
\multicolumn{3}{c}{\textbf{2LS}} \\
\cline{3-20}
& & S & U & W & S & U & W & S & U & W & S & U & W & S & U & W & S & U & W \\
\noalign{\hrule height 1pt}
\multirow{3}{*}{\slice}
& S & 35 & 4 & 0 & 29 & 1 & 0 & 18 & 5 & 0 & 12 & 1 & 0 & 18 & 2 & 0 & 1 & 0 & 0 \\
& U & 0 & 17 & 0 & 4 & 16 & 1 & 2 & 29 & 0 & 2 & 36 & 1 & 3 & 32 & 0 & 0 & 51 & 1 \\
& W & 0 & 0 & 0 & 1 & 1 & 3 & 0 & 0 & 2 & 0 & 1 & 3 & 1 & 0 & 0 & 0 & 0 & 3 \\
\noalign{\hrule height 0.6pt}
\multirow{3}{*}{\concr}
& S & 36 & 3 & 0 & 27 & 0 & 3 & 20 & 1 & 2 & 11 & 0 & 2 & 18 & 0 & 2 & 1 & 0 & 0 \\
& U & 1 & 16 & 0 & 2 & 18 & 1 & 2 & 26 & 3 & 8 & 27 & 4 & 1 & 33 & 1 & 24 & 28 & 0 \\
& W & 0 & 0 & 0 & 0 & 0 & 5 & 0 & 0 & 2 & 0 & 2 & 2 & 0 & 0 & 1 & 0 & 0 & 3 \\
\noalign{\hrule height 0.6pt}
\multirow{3}{*}{\sliceconcr}
& S & 33 & 6 & 0 & 24 & 1 & 5 & 14 & 7 & 2 & 10 & 0 & 3 & 17 & 1 & 2 & 1 & 0 & 0 \\
& U & 1 & 16 & 0 & 6 & 13 & 2 & 3 & 25 & 3 & 9 & 26 & 4 & 3 & 31 & 1 & 23 & 28 & 1 \\
& W & 0 & 0 & 0 & 1 & 3 & 1 & 0 & 0 & 2 & 0 & 3 & 1 & 1 & 0 & 0 & 0 & 0 & 3 \\
\noalign{\hrule height 1pt}
\end{tabular}
\end{subtable}
\begin{subtable}[t]{\textwidth}
\centering
\vspace{6pt}
\caption{Results for terminating benchmarks. S (Solved) denotes program-level termination resolution, i.e., all generated variants are classified as terminating ($Correct,ratio_T=100\%$). U (Unknown) denotes program-level unresolved cases, i.e., fewer than all generated variants are classified as terminating ($Correct,ratio_T<100\%$). W (Wrong) denotes a program-level classification inconsistent with the benchmark ground truth.}
\vspace{-4pt}
\label{table:rq2-term}
\begin{tabular}{l|c|ccc|ccc|ccc|ccc|ccc|ccc}
\noalign{\hrule height 1pt}
\multicolumn{1}{c|}{\multirow{2}{*}{\shortstack[c]{\textbf{Applied}\\\textbf{Configuration}}}} &
\multicolumn{1}{c|}{\multirow{2}{*}{\shortstack[c]{\textbf{\base}\\\textbf{Classification}}}} &
\multicolumn{3}{c|}{\textbf{Athena}} &
\multicolumn{3}{c|}{\textbf{Proton}} &
\multicolumn{3}{c|}{\textbf{UAutomizer}} &
\multicolumn{3}{c|}{\textbf{AProVE}} &
\multicolumn{3}{c|}{\textbf{CPAchecker}} &
\multicolumn{3}{c}{\textbf{2LS}} \\
\cline{3-20}
& & S & U & W & S & U & W & S & U & W & S & U & W & S & U & W & S & U & W \\
\noalign{\hrule height 1pt}
\multirow{3}{*}{\slice}
& S & 34 & 2 & 0 & 52 & 0 & 0 & 25 & 4 & 0 & 22 & 3 & 0 & 8 & 2 & 0 & 29 & 0 & 0 \\
& U & 0 & 25 & 0 & 0 & 4 & 0 & 0 & 31 & 0 & 0 & 36 & 0 & 1 & 43 & 2 & 0 & 32 & 0 \\
& W & 0 & 0 & 0 & 0 & 1 & 4 & 0 & 0 & 1 & 0 & 0 & 0 & 0 & 0 & 5 & 0 & 0 & 0 \\
\noalign{\hrule height 0.6pt}
\multirow{3}{*}{\concr}
& S & 35 & 1 & 0 & 45 & 7 & 0 & 25 & 4 & 0 & 23 & 2 & 0 & 8 & 2 & 0 & 28 & 1 & 0 \\
& U & 4 & 21 & 0 & 1 & 3 & 0 & 2 & 29 & 0 & 2 & 34 & 0 & 9 & 37 & 0 & 1 & 31 & 0 \\
& W & 0 & 0 & 0 & 2 & 0 & 3 & 0 & 1 & 0 & 0 & 0 & 0 & 4 & 0 & 1 & 0 & 0 & 0 \\
\noalign{\hrule height 0.6pt}
\multirow{3}{*}{\sliceconcr}
& S & 33 & 3 & 0 & 45 & 7 & 0 & 20 & 9 & 0 & 22 & 3 & 0 & 6 & 4 & 0 & 28 & 1 & 0 \\
& U & 3 & 22 & 0 & 1 & 3 & 0 & 1 & 30 & 0 & 2 & 34 & 0 & 9 & 37 & 0 & 1 & 31 & 0 \\
& W & 0 & 0 & 0 & 1 & 1 & 3 & 0 & 1 & 0 & 0 & 0 & 0 & 4 & 0 & 1 & 0 & 0 & 0 \\
\noalign{\hrule height 1pt}
\end{tabular}
\end{subtable}
\end{table*}

\subsection{RQ2: Complementarity to the \base}
Table~\ref{table:rq2} compares preprocessing classifications with the corresponding \base classifications for each analyzer. The results show that preprocessing complements \base by solving different benchmark subsets, not simply by adding cases already solved by \base. For example, on non-termination, \slice recovers 5 benchmarks for Proton---4 previously \emph{unknown} and 1 previously \emph{wrong}---while 1 previously \emph{solved} benchmark becomes \emph{unknown}. In contrast, for UAutomizer, \slice recovers 2 benchmarks but loses 5 previously solved ones. Similarly, on termination, CPAchecker has 13 recoveries under both \concr and \sliceconcr, but losses differ: 2 under \concr and 4 under \sliceconcr. Thus, preprocessing can both recover and lose cases, changing the benchmark subset handled by each analyzer. This occurs because some transformed cases align better with an analyzer's abstractions, heuristics, or reasoning mechanisms than the \base form, while others align less well.

Complementarity also differs by benchmark class and recovery source. For non-termination, the largest recovery rates are U$\rightarrow$S under \sliceconcr and \concr: 23.4\% and 19.8\%, respectively. For \sliceconcr, 23.4\% corresponds to 45 recoveries out of 192 \base-\emph{unknown} non-termination cases: Athena 1, Proton 6, UAutomizer 3, AProVE 9, CPAchecker 3, and 2LS 23. In contrast, the largest W$\rightarrow$S recovery rate for non-termination is 13.3\%, under both \slice and \sliceconcr. For termination, the pattern reverses: \concr and \sliceconcr recover 54.5\% and 45.5\% of \base-\emph{wrong} cases, compared with 11.0\% and 9.8\% of \base-\emph{unknown} cases. Thus, preprocessing mainly resolves previously inconclusive \base outcomes for non-termination, but more strongly recovers previously incorrect outcomes for termination.

\concr provides the strongest complementarity, concentrated in a few analyzer--task pairs, while \slice yields smaller and more mixed effects. Under \concr, the largest recoveries occur for 2LS on non-termination (24: 23 previously \emph{unknown} and 1 previously \emph{wrong}), CPAchecker on termination (13), and AProVE on non-termination (8); other recoveries range from 1 to 4. In contrast, \slice mainly recovers non-termination cases: Proton recovers 5, CPAchecker 4, and UAutomizer and AProVE 2 each, while termination recovery is limited to 1 case for CPAchecker. This reflects the transformations: \concr specializes verification to selected input-scenario variants, whereas \slice simplifies program structure while preserving broader behavior and nondeterminism.

Finally, \sliceconcr improves substantially over \slice, but not consistently over \concr. Compared with \slice, the largest recovery increases are 2LS on non-termination (\slice 0 $\rightarrow$ \sliceconcr 23, +23), CPAchecker on termination (+12), and AProVE on non-termination (+7). Compared with \concr, gains are smaller: on non-termination, Proton (\concr 2 $\rightarrow$ \sliceconcr 7, +5), CPAchecker (+3), UAutomizer (+1), and AProVE (+1) improve, while 2LS decreases slightly ($-1$) and Athena is unchanged. On termination, \sliceconcr provides no additional recoveries over \concr; results are unchanged or decrease by 1. These results suggest that most complementarity from \sliceconcr comes from concretization: adding slicing helps relative to \slice alone, but gives little additional benefit once \concr has already been applied.

\begin{rqtakeaway}
\textbf{RQ2 Takeaway.}
Preprocessing complements \base selectively: it solves different benchmark subsets and recovers different types of \base failures across benchmark classes. The strongest recoveries come from configurations containing \concr, while \slice is more limited and \sliceconcr mainly improves over \slice, not consistently over \concr.
\end{rqtakeaway}

\begin{table*}[!t]
\centering
\footnotesize
\setlength{\tabcolsep}{4.6pt}
\caption{Loop-level localization under slicing. The table groups benchmarks by the number of loops $n$ and reports, for each analyzer, how many benchmarks have exactly $k$ resolved loop-specific slices. Resolving $k$ loop-specific slices localizes the original program-level unknown outcome to the remaining $n-k$ unresolved loop-specific slices. In this benchmark set, the maximum number of loops is four. We define partial localization for multi-loop benchmarks as $1 < n \leq 4$ and $0 < k < n$, and complete localization as $1 \leq n \leq 4$ and $k = n$.}
\label{table:rq3}
\begin{subtable}[t]{\textwidth}
\centering
\caption{Results for non-terminating benchmarks.}
\vspace{-4pt}
\label{table:rq3-nonterm}
\begin{tabular}{c|c|ccccc|ccccc|ccccc|ccccc|ccccc|ccccc}
\noalign{\hrule height 1pt}
\multirow{3}{*}{\textbf{Loops}} & \multirow{3}{*}{\textbf{Programs}}
& \multicolumn{5}{c|}{\textbf{Athena}}
& \multicolumn{5}{c|}{\textbf{Proton}}
& \multicolumn{5}{c|}{\textbf{UAutomizer}}
& \multicolumn{5}{c|}{\textbf{AProVE}}
& \multicolumn{5}{c|}{\textbf{CPAchecker}}
& \multicolumn{5}{c}{\textbf{2LS}} \\
\cline{3-32}
& & \multicolumn{5}{c|}{Resolved Loops}
  & \multicolumn{5}{c|}{Resolved Loops}
  & \multicolumn{5}{c|}{Resolved Loops}
  & \multicolumn{5}{c|}{Resolved Loops}
  & \multicolumn{5}{c|}{Resolved Loops}
  & \multicolumn{5}{c}{Resolved Loops} \\
& & 0&1&2&3&4 & 0&1&2&3&4 & 0&1&2&3&4
  & 0&1&2&3&4 & 0&1&2&3&4 & 0&1&2&3&4 \\
\noalign{\hrule height 1pt}
1 & 37 & 14 & 23 & -- & -- & -- & 13 & 20 & -- & -- & -- & 25 & 10 & -- & -- & -- & 25 &  8 & -- & -- & -- & 21 & 16 & -- & -- & -- & 34 &  0 & -- & -- & -- \\
2 & 13 &  1 &  5 &  7 & -- & -- &  0 &  1 & 12 & -- & -- &  2 &  2 &  9 & -- & -- &  5 &  4 &  4 & -- & -- &  8 &  0 &  5 & -- & -- & 10 &  3 &  0 & -- & -- \\
3 & 4 &  1 &  1 &  0 &  2 & -- &  0 &  2 &  1 &  1 & -- &  0 &  3 &  1 &  0 & -- &  0 &  3 &  1 &  0 & -- &  3 &  0 &  1 &  0 & -- &  1 &  1 &  0 &  1 & -- \\
4 & 2 &  1 &  0 &  1 &  0 &  0 &  1 &  1 &  0 &  0 &  0 &  1 &  1 &  0 &  0 &  0 &  1 &  1 &  0 &  0 &  0 &  2 &  0 &  0 &  0 &  0 &  1 &  1 &  0 &  0 &  0 \\
\noalign{\hrule height 1pt}
\end{tabular}
\end{subtable}
\begin{subtable}[t]{\textwidth}
\centering
\vspace{6pt}
\caption{Results for terminating benchmarks.}
\vspace{-4pt}
\label{table:rq3-term}
\begin{tabular}{c|c|ccccc|ccccc|ccccc|ccccc|ccccc|ccccc}
\noalign{\hrule height 1pt}
\multirow{3}{*}{\textbf{Loops}} & \multirow{3}{*}{\textbf{Programs}}
& \multicolumn{5}{c|}{\textbf{Athena}}
& \multicolumn{5}{c|}{\textbf{Proton}}
& \multicolumn{5}{c|}{\textbf{UAutomizer}}
& \multicolumn{5}{c|}{\textbf{AProVE}}
& \multicolumn{5}{c|}{\textbf{CPAchecker}}
& \multicolumn{5}{c}{\textbf{2LS}} \\
\cline{3-32}
& & \multicolumn{5}{c|}{Resolved Loops}
  & \multicolumn{5}{c|}{Resolved Loops}
  & \multicolumn{5}{c|}{Resolved Loops}
  & \multicolumn{5}{c|}{Resolved Loops}
  & \multicolumn{5}{c|}{Resolved Loops}
  & \multicolumn{5}{c}{Resolved Loops} \\
& & 0&1&2&3&4 & 0&1&2&3&4 & 0&1&2&3&4
  & 0&1&2&3&4 & 0&1&2&3&4 & 0&1&2&3&4 \\
\noalign{\hrule height 1pt}
1 & 41 & 16 & 25 & -- & -- & -- &  3 & 36 & -- & -- & -- & 26 & 14 & -- & -- & -- & 27 & 14 & -- & -- & -- & 28 &  6 & -- & -- & -- & 20 & 21 & -- & -- & -- \\
2 & 14 &  5 &  3 &  6 & -- & -- &  0 &  0 & 12 & -- & -- &  2 &  3 &  9 & -- & -- &  5 &  2 &  7 & -- & -- & 11 &  0 &  3 & -- & -- &  6 &  2 &  6 & -- & -- \\
3 & 4 &  1 &  1 &  0 &  2 & -- &  0 &  0 &  1 &  3 & -- &  0 &  2 &  1 &  1 & -- &  0 &  2 &  1 &  1 & -- &  3 &  0 &  1 &  0 & -- &  1 &  1 &  1 &  1 & -- \\
4 & 2 &  0 &  1 &  0 &  0 &  1 &  0 &  1 &  0 &  0 &  1 &  0 &  1 &  0 &  0 &  1 &  1 &  1 &  0 &  0 &  0 &  2 &  0 &  0 &  0 &  0 &  0 &  1 &  0 &  0 &  1 \\
\noalign{\hrule height 1pt}
\end{tabular}
\end{subtable}
\end{table*}

\subsection{RQ3: Loop-Level Localization of (Non-)Termination}
Table~\ref{table:rq3} summarizes loop-level localization under \slice for benchmarks with different numbers of loops. Although \slice has limited program-level impact in RQ1 and RQ2, it provides partial loop-level localization for many multi-loop benchmarks. For non-termination, partial localization is most frequent for AProVE (9 benchmarks), followed by Athena and UAutomizer (7 each). For termination, it is most frequent for UAutomizer (7), followed by AProVE (6) and Athena and 2LS (5 each). For example, Athena's 7 non-termination cases comprise 5 two-loop benchmarks with one resolved slice, 1 three-loop benchmark with one resolved slice, and 1 four-loop benchmark with two resolved slices, localizing the remaining uncertainty to 1, 2, and 2 unresolved loops, respectively. Thus, \slice can provide diagnostic value even without full program-level resolution, because resolved slices reduce the set of loops that may be responsible for an \emph{unknown} outcome.

Complete localization becomes less frequent as loop count increases, whereas partial localization remains common in multi-loop benchmarks. For non-termination, complete localization reaches 34.7\% for 1-loop and 47.4\% for 2-loop benchmarks, but drops to 16.7\% for 3-loop and 0\% for 4-loop benchmarks. In contrast, partial localization remains visible for larger multi-loop benchmarks: 19.2\%, 58.3\%, and 41.7\% for 2-, 3-, and 4-loop benchmarks, respectively. The 47.4\% complete-localization rate for 2-loop benchmarks corresponds to 37 of $13 \times 6$ analyzer--benchmark pairs in which the whole program, including both loops, is resolved: 7 Athena, 12 Proton, 9 UAutomizer, 4 AProVE, 5 CPAchecker, and 0 2LS. The 58.3\% partial-localization rate for 3-loop benchmarks corresponds to 14 of $4 \times 6$ pairs in which at least one, but not all, loop-specific slices are resolved: 1 Athena, 3 Proton, 4 UAutomizer, 4 AProVE, 1 CPAchecker, and 1 2LS. This difference occurs because complete localization requires resolving all loops, which becomes harder as loop count increases, whereas partial localization only requires resolving a subset of loop-specific slices, each analyzed separately using its loop-relevant context.

Finally, partial localization is analyzer- and task-dependent. AProVE has the largest partial-localization count for non-termination (9), UAutomizer has the largest count for termination (7), and CPAchecker has the smallest count in both benchmark classes (1). Moreover, every analyzer has equal or higher partial-localization counts for non-termination than for termination: Athena (7 vs. 5), Proton (5 vs. 2), UAutomizer (7 vs. 7), AProVE (9 vs. 6), CPAchecker (1 vs. 1), and 2LS (5 vs. 5). Thus, the same loop-level decomposition provides different localization benefits across analyzers, and \slice narrows unresolved non-termination cases at least as often as termination cases.

\begin{rqtakeaway}
\textbf{RQ3 Takeaway.}
\slice provides diagnostic value through loop-level localization rather than program-level resolution. Complete localization decreases as loop count increases, but partial localization remains common. Both effects are analyzer-dependent and stronger for non-termination than termination.
\end{rqtakeaway}

\begin{table*}[!t]
\centering
\footnotesize
\setlength{\tabcolsep}{7pt}
\caption{Structural sensitivity under preprocessing configurations. Rows group benchmarks by structural or data-type feature, and entries report the number of benchmarks classified as S, U, or W by each analyzer under each configuration. S, U, and W follow the definitions in Table~\ref{table:rq2}. Results for \concr and \sliceconcr are evaluated over selected input-scenario variants, not over all nondeterministic executions of the original program.}
\label{table:rq4}
\begin{subtable}[t]{\textwidth}
\centering
\caption{Results for non-terminating benchmarks.}
\vspace{-4pt}
\label{table:rq4-nonterm}
\begin{tabular}{l|l|ccc|ccc|ccc|ccc|ccc|ccc}
\noalign{\hrule height 1pt}
\multicolumn{1}{c|}{\multirow{2}{*}{\shortstack[c]{\textbf{Features (Total)}}}} &
\multicolumn{1}{c|}{\multirow{2}{*}{\shortstack[c]{\textbf{Configuration}}}} &
\multicolumn{3}{c|}{\textbf{Athena}} &
\multicolumn{3}{c|}{\textbf{Proton}} &
\multicolumn{3}{c|}{\textbf{UAutomizer}} &
\multicolumn{3}{c|}{\textbf{AProVE}} &
\multicolumn{3}{c|}{\textbf{CPAchecker}} &
\multicolumn{3}{c}{\textbf{2LS}} \\
\cline{3-20}
& & S & U & W & S & U & W & S & U & W & S & U & W & S & U & W & S & U & W \\
\noalign{\hrule height 1pt}
\multirow{4}{*}{\shortstack[l]{Pointer\\Manipulation (7)}} & \base        & 4&3&0 & 4&3&0 & 2&5&0 & 0&7&0 & 0&7&0 & 0&7&0 \\
                                                           & \slice       & 2&5&0 & 5&2&0 & 2&5&0 & 0&7&0 & 0&7&0 & 0&6&1 \\
                                                           & \concr       & 2&5&0 & 6&1&0 & 2&3&2 & 2&5&0 & 0&7&0 & 1&6&0 \\
                                                           & \sliceconcr & 1&6&0 & 6&0&1 & 2&3&2 & 2&5&0 & 0&7&0 & 1&5&1 \\
\noalign{\hrule height 0.6pt}
\multirow{4}{*}{\shortstack[l]{Array (8)}}                 & \base        & 4&4&0 & 3&5&0 & 2&6&0 & 0&8&0 & 0&8&0 & 0&8&0 \\
                                                           & \slice       & 3&5&0 & 4&4&0 & 2&6&0 & 1&7&0 & 0&8&0 & 0&7&1 \\
                                                           & \concr       & 3&5&0 & 5&2&1 & 2&3&3 & 1&6&1 & 0&8&0 & 2&6&0 \\
                                                           & \sliceconcr & 3&5&0 & 4&1&3 & 1&3&4 & 1&5&2 & 0&8&0 & 1&6&1 \\
\noalign{\hrule height 0.6pt}
\multirow{4}{*}{\shortstack[l]{Data\\Structure (3)}}       & \base        & 2&1&0 & 3&0&0 & 2&1&0 & 0&3&0 & 0&3&0 & 0&3&0 \\
                                                           & \slice       & 1&2&0 & 3&0&0 & 2&1&0 & 0&3&0 & 0&3&0 & 0&3&0 \\
                                                           & \concr       & 2&1&0 & 3&0&0 & 2&1&0 & 1&2&0 & 0&3&0 & 1&2&0 \\
                                                           & \sliceconcr & 1&2&0 & 3&0&0 & 2&1&0 & 1&2&0 & 0&3&0 & 1&2&0 \\
\noalign{\hrule height 0.6pt}
\multirow{4}{*}{\shortstack[l]{Bit\\Calculation (5)}}      & \base        & 5&0&0 & 4&1&0 & 2&2&1 & 0&5&0 & 5&0&0 & 0&5&0 \\
                                                           & \slice       & 5&0&0 & 5&0&0 & 2&2&1 & 0&5&0 & 5&0&0 & 0&5&0 \\
                                                           & \concr       & 5&0&0 & 4&1&0 & 2&2&1 & 0&5&0 & 5&0&0 & 5&0&0 \\
                                                           & \sliceconcr & 5&0&0 & 5&0&0 & 2&2&1 & 0&5&0 & 5&0&0 & 5&0&0 \\
\noalign{\hrule height 0.6pt}
\multirow{4}{*}{\shortstack[l]{Integer\\Underflow/\\Overflow (14)}} &
\base        & 14&0&0 & 9&1&4 & 3&9&2 & 1&9&4 & 7&6&1 & 1&13&0 \\
& \slice       & 13&1&0 & 9&3&2 & 4&8&2 & 2&8&4 & 10&4&0 & 1&13&0 \\
& \concr       & 14&0&0 & 9&1&4 & 2&10&2 & 2&9&3 & 7&5&2 & 10&4&0 \\
& \sliceconcr & 13&1&0 & 9&5&0 & 3&9&2 & 3&10&1 & 9&4&1 & 10&4&0 \\
\noalign{\hrule height 0.6pt}
\multirow{4}{*}{\shortstack[l]{Recursion (12)}}            & \base        & 0&12&0 & 0&11&1 & 4&8&0 & 2&10&0 & 0&12&0 & 0&9&3 \\
                                                           & \slice       & 0&12&0 & 0&11&1 & 0&12&0 & 2&10&0 & 0&12&0 & 0&9&3 \\
                                                           & \concr       & 0&12&0 & 0&11&1 & 4&8&0 & 4&8&0 & 0&12&0 & 0&9&3 \\
                                                           & \sliceconcr & 0&12&0 & 0&11&1 & 0&12&0 & 4&8&0 & 0&12&0 & 0&9&3 \\
\noalign{\hrule height 1pt}
\end{tabular}
\end{subtable}
\begin{subtable}[t]{\textwidth}
\centering
\vspace{6pt}
\caption{Results for terminating benchmarks.}
\vspace{-4pt}
\label{table:rq4-term}
\begin{tabular}{l|l|ccc|ccc|ccc|ccc|ccc|ccc}
\noalign{\hrule height 1pt}
\multicolumn{1}{c|}{\multirow{2}{*}{\shortstack[c]{\textbf{Features (Total)}}}} &
\multicolumn{1}{c|}{\multirow{2}{*}{\shortstack[c]{\textbf{Configuration}}}} &
\multicolumn{3}{c|}{\textbf{Athena}} &
\multicolumn{3}{c|}{\textbf{Proton}} &
\multicolumn{3}{c|}{\textbf{UAutomizer}} &
\multicolumn{3}{c|}{\textbf{AProVE}} &
\multicolumn{3}{c|}{\textbf{CPAchecker}} &
\multicolumn{3}{c}{\textbf{2LS}} \\
\cline{3-20}
& & S & U & W & S & U & W & S & U & W & S & U & W & S & U & W & S & U & W \\
\noalign{\hrule height 1pt}
\multirow{4}{*}{\shortstack[l]{Pointer\\Manipulation (7)}} & \base        & 0&7&0 & 2&2&3 & 1&6&0 & 1&6&0 & 0&7&0 & 0&7&0 \\
                                                           & \slice       & 0&7&0 & 2&2&3 & 1&6&0 & 1&6&0 & 0&7&0 & 0&7&0 \\
                                                           & \concr       & 1&6&0 & 4&2&1 & 2&5&0 & 1&6&0 & 0&7&0 & 0&7&0 \\
                                                           & \sliceconcr & 1&6&0 & 3&2&2 & 2&5&0 & 1&6&0 & 0&7&0 & 0&7&0 \\
\noalign{\hrule height 0.6pt}
\multirow{4}{*}{\shortstack[l]{Array (8)}}                 & \base        & 4&4&0 & 5&2&1 & 5&3&0 & 3&5&0 & 0&8&0 & 3&5&0 \\
                                                           & \slice       & 3&5&0 & 5&2&1 & 5&3&0 & 2&6&0 & 0&8&0 & 3&5&0 \\
                                                           & \concr       & 4&4&0 & 7&1&0 & 4&4&0 & 3&5&0 & 0&8&0 & 3&5&0 \\
                                                           & \sliceconcr & 3&5&0 & 6&1&1 & 4&4&0 & 3&5&0 & 0&8&0 & 3&5&0 \\
\noalign{\hrule height 0.6pt}
\multirow{4}{*}{\shortstack[l]{Data\\Structure (3)}}       & \base        & 0&3&0 & 1&0&2 & 0&3&0 & 0&3&0 & 0&3&0 & 0&3&0 \\
                                                           & \slice       & 0&3&0 & 1&0&2 & 0&3&0 & 0&3&0 & 0&3&0 & 0&3&0 \\
                                                           & \concr       & 1&2&0 & 1&1&1 & 0&3&0 & 0&3&0 & 0&3&0 & 0&3&0 \\
                                                           & \sliceconcr & 1&2&0 & 1&1&1 & 0&3&0 & 0&3&0 & 0&3&0 & 0&3&0 \\
\noalign{\hrule height 0.6pt}
\multirow{4}{*}{\shortstack[l]{Bit\\Calculation (13)}}     & \base        & 10&3&0 & 13&0&0 & 6&6&1 & 1&12&0 & 1&9&3 & 6&7&0 \\
                                                           & \slice       & 10&3&0 & 13&0&0 & 6&6&1 & 1&12&0 & 1&8&4 & 6&7&0 \\
                                                           & \concr       & 10&3&0 & 13&0&0 & 5&7&0 & 1&12&0 & 9&4&0 & 6&7&0 \\
                                                           & \sliceconcr & 10&3&0 & 13&0&0 & 5&8&0 & 1&12&0 & 9&4&0 & 6&7&0 \\
\noalign{\hrule height 0.6pt}
\multirow{4}{*}{\shortstack[l]{Integer\\Underflow/\\Overflow (21)}} &
\base        & 17&4&0 & 20&0&1 & 8&12&1 & 7&14&0 & 4&14&3 & 13&8&0 \\
& \slice       & 16&5&0 & 20&1&0 & 8&12&1 & 6&15&0 & 4&12&5 & 13&8&0 \\
& \concr       & 17&4&0 & 17&3&1 & 7&13&0 & 6&15&0 & 10&10&1 & 12&9&0 \\
& \sliceconcr & 15&6&0 & 17&4&0 & 7&14&0 & 5&16&0 & 10&10&1 & 12&9&0 \\
\noalign{\hrule height 0.6pt}
\multirow{4}{*}{\shortstack[l]{Recursion (12)}}            & \base        & 0&12&0 & 10&2&0 & 3&9&0 & 3&9&0 & 0&12&0 & 4&8&0 \\
                                                           & \slice       & 0&12&0 & 10&2&0 & 0&12&0 & 3&9&0 & 0&12&0 & 4&8&0 \\
                                                           & \concr       & 0&12&0 & 9&3&0 & 3&9&0 & 3&9&0 & 0&12&0 & 4&8&0 \\
                                                           & \sliceconcr & 0&12&0 & 9&3&0 & 0&12&0 & 3&9&0 & 0&12&0 & 4&8&0 \\
\noalign{\hrule height 1pt}
\end{tabular}
\end{subtable}
\end{table*}

\subsection{RQ4: Feature Sensitivity}
Table~\ref{table:rq4} reports outcomes for benchmark groups characterized by different structural and data-type features under each preprocessing configuration. The results show that preprocessing is feature-dependent, and that responsive feature categories vary across analyzers and verification tasks. For example, on non-termination, AProVE improves on Pointer Manipulation from 0 solved programs under \base to 2 under both \concr and \sliceconcr, but shows no improvement on Bit Calculation, where all configurations leave 5 programs \emph{unknown}. Thus, for the same analyzer, preprocessing helps some feature categories but not others. Responsive features also differ across analyzers: although AProVE does not improve on Bit Calculation, 2LS improves from 0 solved programs under \base to 5 under both \concr and \sliceconcr. The task also matters: although AProVE improves on non-termination Pointer Manipulation, the corresponding termination benchmarks show no improvement, with all configurations solving 1 program and leaving 6 \emph{unknown}. This suggests that preprocessing helps when transformed tasks match analyzer-specific feature reasoning, but may have little effect on poorly handled features; termination and non-termination reasoning can further change which features benefit.

Preprocessing is most effective on Integer Underflow/Overflow and Bit Calculation benchmarks, while Recursion and Data Structure benchmarks respond weakest. On non-termination, Integer Underflow/Overflow shows gains for four analyzers: UAutomizer (+1 under \slice; \base 3 $\rightarrow$ \slice 4), AProVE (+1, +1, and +2 under \slice, \concr, and \sliceconcr), CPAchecker (+3 and +2 under \slice and \sliceconcr), and 2LS (+9 under both \concr and \sliceconcr). In contrast, Recursion improves only for AProVE (+2 under both \concr and \sliceconcr), with no gains for Athena, Proton, CPAchecker, or 2LS. These results suggest that preprocessing is most beneficial when the main challenge involves arithmetic or bit-level behavior that can be simplified or specialized, whereas recursion and complex data structures require reasoning capabilities that preprocessing alone does not substantially provide.

Finally, configurations containing \concr produce the largest feature-specific improvements, while \slice alone yields smaller and more mixed effects. The largest gains under \concr occur for 2LS on non-termination Integer Underflow/Overflow (\base 1 $\rightarrow$ \concr 10, +9), CPAchecker on termination Bit Calculation (+8), and CPAchecker on termination Integer Underflow/Overflow (+6). \sliceconcr matches \concr on these cases, showing that adding \slice is not consistently additive. In contrast, \slice alone ranges from $-4$ (UAutomizer on non-termination Recursion) to +3 (CPAchecker on non-termination Integer Underflow/Overflow). This suggests that the largest gains are mainly driven by input-scenario specialization: \concr exposes specific behaviors without requiring reasoning over all nondeterminism, whereas \slice mainly removes surrounding context and has smaller, analyzer- and feature-dependent effects.

\begin{rqtakeaway}
\textbf{RQ4 Takeaway.}
Preprocessing effects are strongly feature-sensitive. Arithmetic and bit-level categories benefit most, especially under configurations containing \concr, while recursion and data-structure benchmarks show limited gains. The responsive features still vary across analyzers and verification tasks.
\end{rqtakeaway}

\subsection{RQ5: Incorrect Classification Reduction}
Our results show that preprocessing does not consistently reduce incorrect classifications relative to \base; its effect varies across analyzers, benchmark classes, and configurations. Figure~\ref{figure:rq1} shows that, for termination, incorrect classifications decrease for Proton (\base 5 $\rightarrow$ \slice 4, \concr/\sliceconcr 3), UAutomizer (\base 1 $\rightarrow$ \concr/\sliceconcr 0), and CPAchecker (\base 5 $\rightarrow$ \concr/\sliceconcr 1), although \slice increases CPAchecker incorrect classifications from 5 to 7. For non-termination, reductions are limited to \slice for Proton (5 $\rightarrow$ 4) and CPAchecker (1 $\rightarrow$ 0); by contrast, \concr and \sliceconcr increase incorrect counts for several analyzers, including Proton (5 $\rightarrow$ 9/8), UAutomizer (2 $\rightarrow$ 7/7), AProVE (4 $\rightarrow$ 8/8), and CPAchecker (1 $\rightarrow$ 4/3).

Table~\ref{table:rq2} explains why incorrect counts can both decrease and increase. For termination, Proton's 5 \base-\emph{wrong} cases are partly removed: \slice turns one into \emph{unknown}, \concr solves two, and \sliceconcr solves one while turning another into \emph{unknown}. Thus, preprocessing can reduce incorrect outcomes either by enabling a correct result or by avoiding an incorrect conclusive one. However, it can also create new incorrect outcomes. For CPAchecker with \slice, the 5 \base-\emph{wrong} termination cases remain \emph{wrong}, while 2 \base-\emph{unknown} cases become \emph{wrong}. This means preprocessing makes additional benchmarks conclusive, but some are classified incorrectly. For Proton with \concr on non-termination, the 5 \base-\emph{wrong} cases remain \emph{wrong}, and 4 new incorrect cases appear, 3 of which were previously correct under \base. Thus, even semantics-preserving transformations can change how backend analyzers respond to verification tasks.

Incorrect classifications require particular attention. Ideally, semantics-preserving preprocessing should preserve or improve correctness, and analyzer performance should improve or at least not deteriorate, since slicing and concretization are intended to simplify verification of transformed instances. However, our results show that incorrect classifications and performance deterioration remain challenges for several tools. This aligns with prior studies showing that (non-)termination analyzers often perform well on theoretical benchmarks such as TermCOMP and SV-COMP, but struggle more with real-world-driven C programs involving complex features and low-level semantics. The observed variation across configurations is informative: transformed benchmarks can expose backend analyzer limitations, and these mixed outcomes provide useful evidence for improving (non-)termination tools. For application programmers, the results indicate when preprocessing should be used cautiously and interpreted as part of a portfolio workflow. For tool developers, they identify analyzer--feature and analyzer--transformation combinations where additional engineering is needed to improve robustness beyond increasing solved counts.

Figure~\ref{figure:rq1} further shows that configurations differ in the magnitude and direction of incorrect-classification changes. \slice produces only small changes, mostly 0 or $\pm 1$, with one larger increase for CPAchecker on termination (+2). In contrast, \concr has the strongest changes in both directions: it yields the largest reductions for termination, especially CPAchecker ($-4$) and Proton ($-2$), but also the largest increases for non-termination, especially UAutomizer (+5), Proton (+4), and AProVE (+4). \sliceconcr largely mirrors \concr: it preserves the same termination reductions, especially for CPAchecker ($-4$) and Proton ($-2$), and similar non-termination increases, especially for UAutomizer (+5), AProVE (+4), and Proton (+3). This suggests that \slice is relatively conservative, whereas \concr-based configurations have stronger mixed effects: input-scenario specialization can make additional variants conclusive, but not necessarily correctly classified.

Finally, Table~\ref{table:rq4} shows that feature-level changes in incorrect classifications are concentrated in specific analyzer--feature combinations rather than distributed uniformly across features. Reductions occur mainly in Integer Underflow/Overflow and Bit Calculation. For Integer Underflow/Overflow, incorrect classifications decrease on non-termination for Proton ($-2$ under \slice, $-4$ under \sliceconcr), AProVE ($-1$ under \concr, $-3$ under \sliceconcr), and CPAchecker ($-1$ under \slice), and on termination for Proton under \slice and \sliceconcr and for UAutomizer and CPAchecker under \concr and \sliceconcr. In contrast, increases concentrate mainly in non-termination Array and Pointer Manipulation benchmarks. For Array, incorrect classifications increase under \concr/\sliceconcr for Proton, UAutomizer, and AProVE by +1/+3, +3/+4, and +1/+2, respectively, and under \slice/\sliceconcr for 2LS by +1/+1. This suggests that preprocessing helps when the remaining task matches an analyzer's strengths, whereas Array and Pointer Manipulation cases may become conclusive without fully resolving the underlying memory- and aliasing-related difficulty.

\begin{table*}[!t]
\centering
\footnotesize
\setlength{\tabcolsep}{7pt}
\caption{Efficiency and runtime impact under preprocessing configurations. Entries report the average Solved Ratio (SR), average Total Variant Time (TVT), average Average Variant Time (AVT), and median Median Variant Time (MVT) for non-terminating (NT) and terminating (T) benchmarks across analyzers and configurations. Runtime statistics are computed over correctly solved programs. Results for \concr and \sliceconcr are evaluated over selected input-scenario variants, not over all nondeterministic executions of the original program.}
\label{table:rq6}
\begin{tabular}{l|l|cc|cc|cc|cc|cc|cc}
\noalign{\hrule height 1pt}
\multicolumn{1}{c|}{\multirow{2}{*}{\shortstack[c]{\textbf{Metric}}}} & 
\multicolumn{1}{c|}{\multirow{2}{*}{\shortstack[c]{\textbf{Configuration}}}} & 
\multicolumn{2}{c|}{\textbf{Athena}} & 
\multicolumn{2}{c|}{\textbf{Proton}} & 
\multicolumn{2}{c|}{\textbf{UAutomizer}} & 
\multicolumn{2}{c|}{\textbf{AProVE}} & 
\multicolumn{2}{c|}{\textbf{CPAchecker}} & 
\multicolumn{2}{c}{\textbf{2LS}} \\
\cline{3-14}
& & NT & T & NT & T & NT & T & NT & T & NT & T & NT & T \\
\noalign{\hrule height 1pt}
\multirow{4}{*}{Avg. SR (\%)}
& \base & 69.64 & 59.02 & 53.57 & 85.25 & 41.07 & 47.54 & 23.21 & 40.98 & 35.71 & 16.39 & 1.79 & 47.54 \\
& \slice & 63.10 & 59.15 & 62.65 & 88.39 & 39.14 & 46.04 & 28.42 & 40.30 & 38.69 & 15.85 & 6.70 & 51.23 \\
& \concr & 59.64 & 69.34 & 64.11 & 90.82 & 48.93 & 57.54 & 41.96 & 53.44 & 41.79 & 46.39 & 39.64 & 57.38 \\
& \sliceconcr & 57.32 & 66.04 & 67.38 & 90.77 & 43.10 & 53.25 & 45.60 & 50.87 & 42.53 & 45.52 & 42.77 & 61.37 \\
\hline
\multirow{4}{*}{Avg. TVT (s)}
& \base & 14.27 & 8.11 & 3.37 & 76.71 & 4.68 & 3.90 & 2.11 & 3.68 & 2.14 & 18.71 & 0.01 & 0.20 \\
& \slice & 14.19 & 8.56 & 12.07 & 119.89 & 7.35 & 7.07 & 3.33 & 5.61 & 2.92 & 1.83 & 0.06 & 0.34 \\
& \concr & 29.33 & 32.54 & 63.14 & 90.72 & 23.44 & 50.42 & 20.58 & 27.88 & 41.97 & 98.40 & 1.47 & 2.35 \\
& \sliceconcr & 35.49 & 57.22 & 80.76 & 129.53 & 30.90 & 64.95 & 32.44 & 35.54 & 43.32 & 76.95 & 2.09 & 3.67 \\
\hline
\multirow{4}{*}{Avg. AVT (s)}
& \base & 14.27 & 8.11 & 3.37 & 76.71 & 4.68 & 3.90 & 2.11 & 3.68 & 2.14 & 18.71 & 0.01 & 0.20 \\
& \slice & 10.88 & 5.22 & 6.80 & 73.04 & 4.11 & 3.38 & 2.71 & 3.69 & 2.28 & 1.31 & 0.04 & 0.22 \\
& \concr & 3.57 & 4.25 & 8.88 & 10.92 & 4.17 & 9.86 & 4.26 & 4.29 & 5.07 & 26.34 & 0.24 & 0.29 \\
& \sliceconcr & 3.54 & 4.74 & 6.52 & 9.44 & 3.19 & 7.80 & 3.45 & 2.82 & 4.75 & 7.75 & 0.27 & 0.31 \\
\hline
\multirow{4}{*}{Med. MVT (s)}
& \base & 1.97 & 1.87 & 2.03 & 43.70 & 0.00 & 0.00 & 0.00 & 0.00 & 0.00 & 0.00 & 0.00 & 0.00 \\
& \slice & 1.88 & 1.87 & 2.08 & 47.24 & 0.00 & 2.41 & 0.00 & 0.00 & 0.00 & 0.00 & 0.00 & 0.29 \\
& \concr & 1.72 & 1.73 & 6.07 & 7.43 & 3.00 & 2.76 & 3.16 & 2.94 & 4.86 & 4.85 & 0.36 & 0.37 \\
& \sliceconcr & 1.75 & 1.77 & 5.60 & 7.82 & 2.64 & 2.73 & 3.23 & 2.81 & 4.86 & 4.83 & 0.36 & 0.37 \\
\noalign{\hrule height 1pt}
\end{tabular}
\end{table*}

\begin{rqtakeaway}
\textbf{RQ5 Takeaway.}
Preprocessing does not consistently reduce incorrect classifications. \slice has relatively conservative effects, whereas \concr-based configurations produce larger but mixed changes, with feature-level effects concentrated in specific analyzer--feature pairs.
\end{rqtakeaway}

\subsection{RQ6: Efficiency and Runtime Impact}
Table~\ref{table:rq6} compares runtime efficiency across preprocessing configurations. 
Because TVT, AVT, and MVT are computed over correctly solved programs, they should be interpreted together with SR: lower runtime may reflect a smaller or easier solved subset, while higher runtime may reflect additional solved cases or generated variants. Configurations containing \concr substantially increase total analysis cost (TVT), whereas \slice usually has a smaller effect. For example, for Athena on termination, TVT increases from 8.11s under \base to 32.54s under \concr and 57.22s under \sliceconcr, while \slice changes it only slightly to 8.56s. This is expected because TVT accumulates runtime across all analyzed variants, and \concr-based configurations introduce additional input-scenario variants.

Despite increasing total runtime, preprocessing can reduce per-variant runtime (AVT), depending on the analyzer and benchmark class. For Athena on termination, AVT decreases from 8.11s under \base to 5.22s under \slice, 4.25s under \concr, and 4.74s under \sliceconcr. In contrast, Proton's non-termination AVT increases from 3.37s under \base to 6.80s under \slice, 8.88s under \concr, and 6.52s under \sliceconcr. Since AVT normalizes by the number of analyzed variants, these results show that preprocessing can make individual variants easier to analyze, but not uniformly across analyzers and benchmark classes.

Finally, configurations containing \concr generally affect typical-case runtime (MVT) more than \slice, though the direction depends on the analyzer and benchmark class. For Proton's termination benchmarks, MVT increases slightly from 43.70s under \base to 47.24s under \slice, but drops to 7.43s under \concr and 7.82s under \sliceconcr. For CPAchecker's non-termination benchmarks, MVT remains unchanged under \slice but increases to 4.86s under both \concr and \sliceconcr. Because MVT reflects a typical analyzed variant and is less sensitive to outliers than AVT, \concr-based configurations alter typical-variant difficulty more strongly than \slice, with effects that may be positive or negative.

\begin{rqtakeaway}
\textbf{RQ6 Takeaway.}
Preprocessing increases total analysis cost mainly by generating additional variants, but it does not uniformly make those variants harder to analyze. Per-variant runtime effects vary across analyzers and benchmark classes.
\end{rqtakeaway}

\subsection{RQ7: Tractability--Generality Trade-offs}
The results show a trade-off among detectability, solver cost, and semantic scope. Configurations containing \concr can improve detectability for selected analyzers, but establish correctness or divergence only over selected input-scenario variants, not all nondeterministic executions. In contrast, \slice preserves broader loop-relevant behavior and program semantics, but usually yields smaller detectability gains; its main benefit is diagnostic localization of unresolved behavior to specific loops. The tractability cost also differs: \slice has limited overhead, whereas \concr and \sliceconcr can increase TVT by generating additional variants. However, AVT and MVT show that some variants become easier to analyze, so higher TVT does not necessarily imply worse per-variant tractability.

\begin{rqtakeaway}
\textbf{RQ7 Takeaway.}
\concr improves detectability by specializing inputs, but narrows semantic scope and can increase total cost. \slice preserves broader behavior and mainly helps localization, while \sliceconcr combines both effects without consistently dominating either.
\end{rqtakeaway}

\subsection{RQ8: Integration Potential}
In principle, sound preprocessing should improve analyzer results; mixed outcomes suggest limitations in how backend analyzers handle transformed tasks. Our results support integrating \slice, \concr, and \sliceconcr into \base verification as complementary options rather than replacements. Across RQ1--RQ5, no configuration is uniformly best: each may improve, degrade, or leave results unchanged depending on the analyzer, benchmark class, and program features. RQ2 further shows that preprocessing solves different benchmark subsets from \base. Thus, \slice supports structural simplification and loop-level localization, \concr supports input-scenario specialization, and \sliceconcr may help when both effects are relevant. 
The behavior of \concr and \sliceconcr also suggests potential for hybrid static--dynamic workflows, with effects depending on the analyzer, task, and program features. Accordingly, integration should be adaptive: retain \base as the default reference, then selectively apply \slice for difficult loop structures, \concr when nondeterministic input scenarios limit analysis, and \sliceconcr when both effects may help. Results should be interpreted in a portfolio style.

\begin{rqtakeaway}
\textbf{RQ8 Takeaway.}
Preprocessing should be integrated as an adaptive portfolio. \base should remain the default reference, while \slice, \concr, and \sliceconcr are applied selectively based on the likely source of analysis difficulty.
\end{rqtakeaway}

\section{Discussion}
\label{section:discussion}
This section discusses and summarizes the main implications of the empirical results.

\paragraph{Preprocessing as analyzer-, task-, and feature-dependent refinement}
The empirical results show that preprocessing acts as an analyzer-, task-, and feature-dependent refinement rather than a uniformly beneficial optimization. Across analyzers and verification tasks, preprocessing changes the set of cases each analyzer handles instead of monotonically improving on \base, with effects generally more visible for non-termination than for full termination resolution. Outcomes also vary across features: arithmetic and bit-level benchmarks benefit most often, whereas recursion and data-structure benchmarks show more limited gains. This variation helps application programmers interpret preprocessing outcomes cautiously and provides tool developers with evidence of analyzer--feature and analyzer--transformation interactions that require stronger support.

\paragraph{Configuration-level effects and trade-offs}
\slice, \concr, and \sliceconcr modify different dimensions of the verification task. \slice removes termination-irrelevant context and isolates loop-level obligations; \concr reduces nondeterministic input variability by analyzing selected input-scenario variants; and \sliceconcr combines structural isolation with input specialization. Their effects differ in magnitude and direction: \slice is generally the most conservative, improving loop-level localization and producing modest, mixed changes in solved cases and incorrect classifications. \concr is more aggressive, yielding the largest gains in some analyzer--task pairs and reducing certain incorrect classifications, but it can also lose solved cases or increase incorrect classifications when specialization changes the behaviors exposed to the analyzer. \sliceconcr can help when both structural coupling and nondeterminism hinder analysis, but its benefits are not consistently additive beyond \concr. The configurations also differ in generality and cost: \slice preserves broader loop-relevant behavior, whereas \concr and \sliceconcr narrow semantic scope to selected variants; similarly, \slice usually has moderate cost, while \concr-based configurations may improve per-variant tractability but increase total runtime by generating multiple variants.

\paragraph{Toward adaptive integration}
These results suggest that preprocessing is most useful not as a replacement for \base, but as a complementary way to obtain alternative views of difficult verification cases. When original analysis is inconclusive, transformed variants can help indicate whether the difficulty is associated with loop structure, input variability, analyzer limitations, or feature-specific reasoning gaps. Overall, the results support an adaptive workflow in which preprocessing complements, diagnoses, and supports original-program analysis.

\section{Threats to Validity}
\label{section:threats}
This section discusses the main threats to validity and mitigation steps.

\paragraph{Construct validity}
A main threat concerns how concretized variants represent the original program. Input-driven concretization depends on the input-generation strategy, number of generated inputs, and behavioral diversity of those inputs; different choices may expose different execution domains and affect quantitative results. Since our goal is to study verification behavior under input-scenario specialization, not input generation, concretized variants should be interpreted as selected input-scenario instances: they preserve behavior for exercised inputs but do not establish (non-)termination for all nondeterministic executions of the original program.

\paragraph{Internal validity}
Generated variants and their ground-truth labels require controlled validation to preserve semantic consistency. We mitigate this by assigning labels according to each generated program instance and applying a consistent aggregation policy across configurations. However, validation may limit scalability to larger benchmark sets, and alternative aggregation policies could yield different program-level summaries. In addition, concretization benefits depend on whether backend analyzers exploit fixed input values. Because our pipeline does not add analyzer-specific instrumentation or extra constant-propagation passes, these differences remain part of the measured analyzer behavior.

\paragraph{Performance validity}
Configurations that generate multiple variants may introduce opportunity bias relative to \base, since verification effort is distributed across several decomposed analyses rather than one original-program analysis. Efficiency conclusions also depend on the runtime model: TVT captures cumulative runtime across variants, while AVT and MVT capture per-variant and typical-case costs over correctly solved programs. We therefore report TVT, AVT, MVT, and SR together to relate runtime cost to the solved subset used for timing analysis. Conclusions may still vary with timeout budgets, hardware, tool settings, or variant-generation policies.

\paragraph{External validity}
Generality is limited by the selected benchmark suite and analyzers. Although the benchmarks are real-world-driven and include challenging C/C++ features, and the analyzers cover diverse verification paradigms, the observed effects may differ for other suites, larger codebases, semantic models, language features, or (non-)termination tools.


\section{Conclusion and Future Work}
\label{section:conclusion}
We presented \tool, a lightweight, tool-independent preprocessing front end for C programs that applies loop-based slicing and input-driven concretization before (non-)termination analysis. Our empirical study shows that preprocessing is not uniformly beneficial: its effects depend on the analyzer, verification task, and program features. \slice provides conservative structural isolation and loop-level localization, whereas \concr can improve detectability for selected input-scenario variants at the cost of narrower semantic scope and additional analysis effort. The combined \sliceconcr configuration is not consistently additive. Overall, the results support using preprocessing adaptively as a complement to \base, providing alternative verification views that help diagnose difficult cases and guide future analyzer engineering.

Our findings suggest several directions for future work. First, adaptive preprocessing strategies could decide when and how to apply slicing or concretization, guided by structural features or early analysis feedback. Such strategies could preserve preprocessing benefits while avoiding unnecessary variant generation and analyzer-sensitive regressions. Second, input discovery for concretization could be improved through lightweight search, heuristic exploration, or coverage-guided generation to expose informative behaviors. Third, future work could extend the loop-focused slicing criteria and preservation arguments to recursive call structures. Fourth, the empirical scope could be broadened by evaluating additional analyzers and benchmark suites, including larger real-world codebases with richer C/C++ features. Finally, future work could study whether these source-level transformations extend beyond termination analysis to tasks such as safety verification, liveness reasoning, or resource-bound inference.

\section*{Data Availability}
\label{section:data-availability}
The artifacts supporting this paper are publicly available at \url{https://github.com/negarfathi/FocusTNT}, including the implementation, benchmarks, evaluation results, and reproduction scripts.

\section*{Acknowledgments}
\label{section:acknowledgments}
The authors used OpenAI ChatGPT (GPT-4o)~\cite{chatgpt} solely for grammar checking, language editing, and clarity improvement. All of these edits were reviewed and validated by the authors, who take full responsibility for the final manuscript.

\bibliographystyle{IEEEtran}
\bibliography{references}

\end{document}